\documentclass[a4paper]{article}

\usepackage[colorlinks,citecolor=blue,linkcolor=blue, urlcolor=blue]{hyperref}
\usepackage{amsmath, amsthm, amssymb}
\usepackage{color}
\usepackage{comment}
\usepackage[framemethod=tikz]{mdframed}
\usepackage{geometry}
\usepackage{enumitem}



\newtheorem{theorem}{Theorem}
\newtheorem{lemma}[theorem]{Lemma}

\newtheorem{corollary}[theorem]{Corollary}
\newtheorem{definition}[theorem]{Definition}

\newtheorem{remark}[theorem]{Remark}


\newcommand{\rab}{\rho_{AB}}
\newcommand{\ra}{\rho_A}
\newcommand{\ras}{{\rho_A^{\ast}}}
\newcommand{\sa}{\sigma_A}
\newcommand{\rb}{\rho_B}

\newcommand{\rtab}{\widetilde{\rho}_{AB}}
\newcommand{\rabal}{\rho_{AB}^{(\alpha)}}
\newcommand{\rtabal}{\widetilde{\rho}_{AB}^{(\alpha)}}
\newcommand{\sabp}{\sigma_{AB'}}

\newcommand{\sapbp}{\sigma_{A'B'}}
\newcommand{\rt}{\widetilde{\rho}}
\newcommand{\st}{\widetilde{\sigma}}

\newcommand{\rtabpq}{\widetilde{\rho}_{AB}^{(p,q)}}

\newcommand{\ket}[1]{|#1\rangle}
\newcommand{\bra}[1]{\langle #1 |}

\newcommand{\hilbert}[1]{\mathcal{H}_{#1}}

\newcommand{\lha}{\mathbf{L}(\mathcal{H}_A)}
\newcommand{\lhb}{\mathbf{L}(\mathcal{H}_B)}
\newcommand{\lhbp}{\mathbf{L}(\mathcal{H}_{B'})}
\newcommand{\reals}{\mathbb{R}}
\newcommand{\ribcb}[2]{\mathcal{R}_{{\text{\rm{CB,}}} #1 \rightarrow #2}}
\newcommand{\ribhc}[2]{{\mathcal{R}}_{#1 \rightarrow #2}}
\newcommand{\rib}[2]{{\mathcal{R}}_{#1 \rightarrow #2}}
\newcommand{\cbnorm}[3]{ \lVert #1 \rVert_{\text{\rm{CB}},#2 \rightarrow #3}}
\newcommand{\norm}[1]{ \Vert #1 \Vert}
\newcommand{\normt}[3]{\Vert #1 \Vert_{#2 \rightarrow #3}}

\newcommand{\tr}[1]{\text{\rm{tr}}\left(#1\right)}
\newcommand{\trA}[1]{\text{\rm{tr}}_A\left ( #1 \right)}

\newcommand{\trBp}[1]{\text{\rm{tr}}_{B'}\left ( #1 \right)}

\newcommand{\id}{\mathcal{I}}


\title{Impossibility of Local State Transformation via Hypercontractivity}

\author{ Payam Delgosha \\ {\it \small Department of Electrical Engineering,} {\it \small Sharif University of Technology,} {\it \small Tehran, Iran}\\{\it \small School of Mathematics,} {\it \small Institute for Research in Fundamental Sciences (IPM),} {\it \small Tehran, Iran}
\and Salman Beigi\\ {\it \small School of Mathematics,} {\it \small Institute for Research in Fundamental Sciences (IPM), Tehran, Iran}
\\}

\begin{document}
\maketitle

\begin{abstract} 
Local state transformation is the problem of transforming an arbitrary number of copies of a bipartite resource state to a bipartite target state under local operations. That is, given two bipartite states, is it possible to transform an arbitrary number of copies of one of them to one copy of the other state under local operations only?
This problem is a hard one in general since we assume that the number of copies of the resource state is arbitrarily large. In this paper we prove some bounds on this problem using the hypercontractivity properties of some super-operators corresponding to bipartite states. We measure hypercontractivity in terms of both the usual super-operator norms as well as completely bounded norms. 
\end{abstract}

\section{Introduction}
Local state transformation is the problem of transforming a given bipartite resource state $\rab$ to another bipartite target state $\sapbp$ under local operations only, i.e., do there exist completely-positive trace preserving (CPTP) super-operators $\Phi_{A\rightarrow A'}$ and $\Psi_{B\rightarrow B'}$ such that $\Phi\otimes \Psi(\rab)=\sapbp$? 
Solving this problem by brute-force search on the space of CPTP maps is not feasible when the dimensions of $\rab$ and $\sapbp$ are large, in which case imposing necessary conditions can be useful. 

Local operations do not generate entanglement, so if $\sapbp$ is more entangled than $\rab$ then such $\Phi, \Psi$ do not exist. So measures of entanglement provide us with bounds on the problem of local state transformation. Likewise, to attack this problem for classical states, we may use measures of correlation. For instance if the mutual information $I(A; B)$ of $\rab$ is less than that of $\sapbp$, then the latter cannot be 
generated from the former under local operations. 

These bounds however, usually fail when infinitely many copies of the resource state are available and we need to generate only \emph{one} copy of the target state, i.e., when we want to transform $\rab^{\otimes n}$, for a sufficiently large $n$, to $\sapbp$ under local operations. The point is that most measures of entanglement and correlation (such as mutual information, entanglement of formation, squashed entanglement etc.) tend to infinity on $\rab^{\otimes n}$ as $n$ gets larger and larger if $\rab$ is not uncorrelated or unentangled. Thus the following question arrises naturally: is there a measure of correlation or entanglement that give the same number to $\rab^{\otimes n}$ for all $n$?

\subsection{Maximal correlation}
There is a measure of correlation for bipartite classical states (distributions) called \emph{maximal correlation}. This measure is first introduced by Hirschfeld~\cite{Hirschfeld} and Gebelein~\cite{Gebelein} and then studied by R\'enyi~\cite{Renyi1, Renyi2}. Among other properties, maximal correlation satisfies data processing inequality. Namely, it does not increase under local operations. More importantly, maximal correlation of $n$ independent copies of a bipartite distribution is equal to the maximal correlation of only one copy. Given these two properties, maximal correlation gives a bound on the problem of local state transformation.

Maximal correlation has recently been defined for quantum states~\cite{Beigi12}. For a bipartite state $\rab$, its maximal correlation $\mu(\rab)$ is defined by 
\begin{align}
\mu(\rho_{AB})=\max\,\, & \left| \tr{\rho_{AB} X_A\otimes Y^{\dagger}_{B}}\right|\label{eq:max-corellation}\\
& \tr{\rho_AX_A} = \tr{\rho_B Y_B} =0,\nonumber\\
& \tr{\rho_A X_AX_A^{\dagger}} = \tr{\rho_B Y_BY_B^{\dagger}} =1.\nonumber
\end{align}
 This definition is reduced to the classical maximal correlation when $\rab$ is classical. It is shown in~\cite{Beigi12} that maximal correlation satisfies the following two important properties: 
 \begin{enumerate}
 \item[\rm{(i)}] $\mu(\rho_{A_1B_1}\otimes \rho'_{A_2B_2}) = \max\{\mu(\rho_{A_1B_1}), \mu(\rho'_{A_2B_2})\}$.
 \item[\rm{(ii)}] If $\sapbp = \Phi\otimes\Psi(\rab) $ then $\mu(\rab)\geq \mu(\sapbp)$.
 \end{enumerate}
As a result, using maximal correlation we may prove the impossibility of local state transformation in some cases even if infinitely many copies of the resource state is available. For example if we define 
\begin{align}\label{eq:rho-dep-1}
\rho^{(\alpha)}_{AB} = (1-\alpha) \frac{I_{AB}}{4} + \alpha\,\ket \psi\bra\psi_{AB},
\end{align}
where $\ket{\psi} =\frac{1}{\sqrt 2}(\ket{00}+\ket{11})$ and $I_{AB}/4$ is the maximally mixed state, then $\mu(\rab^{(\alpha)}) = \alpha$. This means that, if $\beta >\alpha$, having even infinitely many copies of $\rab^{(\alpha)}$ we cannot generate a single copy of $\rab^{(\beta)}$ under local operations.

Let us give another example. Let $\zeta_{UV}$ be the bipartite distribution over two bits defined by 
\begin{align}\label{eq:zeta-1}
\zeta_{00}=\zeta_{01}=\zeta_{10}=1/3 \quad \quad \text{ and }\quad \quad  \zeta_{11}=0.
\end{align}
 Then as pointed out in~\cite{KamathAnantharam} we have $\mu(\zeta_{UV}) = 1/2$. As a result, two parties who have shared  infinitely many copies of $\rab^{(\alpha)}$ cannot generate the bipartite correlation $\zeta_{UV}$ by local measurements if $\alpha<1/2$. 

Maximal correlation characterizes all states from which (perfect) shared  randomness can be distilled under local operations~\cite{Witsenhausen, Beigi12}. Nevertheless, as one expects, it does not solve the problem of local state transformation in general. In the above example we see that maximal correlation does not rule out the possibility of locally transforming $n$ copies of $\rab^{(\alpha)}$ to $\zeta_{UV}$ when $\alpha\geq 1/2$. The possibility of such a transformation for $\alpha=1$ is easily verified, but we do not know the answer for $1/2< \alpha<1$.

\subsection{Hypercontractivity}

Another idea to attack the problem of local state transformation is hypercontractivity. This idea is due to Ahlswede and G\'acs~\cite{AhlswedeGacs}, and has recently been revisited by Kamath and Anantharam~\cite{KamathAnantharam} and Anantharam et al.~\cite{Anantharametal} (see also~\cite{Raginsky} by Raginsky). Here is a rough description of the hypercontractivity method. 

Via Choi-Jamio{\l}kowski isomorphism a bipartite state $\rho_{AB}$ corresponds to a completely-positive super-operator $\Omega_{\rho}$ from the space of register $A$ to that of $B$. Suppose that for a CPTP map $\Psi_{B\rightarrow B'}$ we have $\sigma_{AB'} = \mathcal I\otimes \Psi(\rab)$, where $\mathcal I$ is the identity super-operator. This equation in terms of the corresponding super-operators gives $\Omega_{\sigma} = \Psi\circ \Omega_{\rho}$.

Recall that for every $1\leq p\leq \infty$ we may define a $p$-norm $\|\cdot\|_p$ (also called the Schatten norm).  Then for every $1\leq p, q\leq \infty$ we may consider the super-operator norm 
$$\|  \Omega_\rho  \|_{p\rightarrow q} = \sup_{X\neq 0} \frac{\|\Omega_\rho(X)\|_q}{\|X\|_p}.$$
From the definition of this norm and $\Omega_{\sigma} = \Psi\circ \Omega_{\rho}$ we obtain 
$$\|\Omega_{\sigma}\|_{p\rightarrow q} \leq \|\Psi\|_{q\rightarrow q} \|\Omega_\rho\|_{p\rightarrow q},$$
which puts restrictions on $\sigma$ in terms of super-operator norms. 

This restriction however is not very strong since $\|\Psi\|_{q\rightarrow q}$ could be very large. To overcome this problem instead of $\rab$ we may consider the super-operator corresponding to some \emph{normalization} of $\rab$ which we denote by $\rtab$. If the normalization is done properly, $\sigma_{AB'} = \mathcal I\otimes \Psi(\rab)$ will give $\Omega_{\st}=\widetilde \Psi\circ \Omega_{\rt}$, where again $\widetilde \Psi$ is some normalization of $\Psi$. As a result,
$$\|\Omega_{\st}\|_{p\rightarrow q} \leq \|\widetilde \Psi\|_{q\rightarrow q} \|\Omega_{\rt} \|_{p\rightarrow q}.$$
To get rid of the dependency on $\Psi$ in the above equation, the normalization is defined in such a way that 
\begin{align}\label{eq:psi-norm-1}
\|\widetilde \Psi\|_{q\rightarrow q}\leq 1,
\end{align}
for all values of $q$ and all CPTP maps $\Psi$. Putting these two inequalities together, we arrive at 
$$\|\Omega_{\st}\|_{p\rightarrow q} \leq \|\Omega_{\rt} \|_{p\rightarrow q}.$$

Yet this is not the final step since in the problem of local state transformation we assume that infinitely many copies of the resource state is available. Indeed we should compare the maximum of $\|\Omega_{\rt^{\otimes n}}\|_{p\rightarrow q}$ over all $n$, to $\|\Omega_{\st}\|_{p\rightarrow q}$. We have $\Omega_{\rt^{\otimes n}} = \Omega_{\rt}^{\otimes n}$ and by the definition of super-operator norm 
\begin{align}\label{eq:omega-norm-1}
\|\Omega_{\rt}^{\otimes n}\|_{p\rightarrow q} \geq  \|\Omega_{\rt}\|_{p\rightarrow q}^n.
\end{align}
Thus $\|\Omega_{\rt^{\otimes n}}\|_{p\rightarrow q}$ tends to infinity as $n\rightarrow \infty$ if $ \|\Omega_{\rt}\|_{p\rightarrow q}> 1$, in which case comparing to $ \|\Omega_{\st}\|_{p\rightarrow q}$ gives no bound. 

This observation suggests to consider the set of all pairs $(p, q)$ such that 
$$\|\Omega_{\rt}^{\otimes n}\|_{p\rightarrow q}\leq 1,$$
 for all $n$. This set is called the \emph{hypercontractivity ribbon}~\cite{KamathAnantharam}. Putting everything together we conclude that if $\sapbp$ can be locally generated from copies of $\rab$, then the hypercontractivity ribbon of $\sapbp$ is a subset of that of $\rab$.

\subsection{Quantum hypercontractivity ribbon}
The main contribution of this paper is to generalize the idea of hypercontractivity in the classical case~\cite{AhlswedeGacs, KamathAnantharam, Anantharametal} to the quantum setting. This idea is presented here based on the notation of the quantum theory, so this generalization may seem straightforward. Nevertheless there are some difficulties.
The first one is proving the upper bound on the super-operator norm of $\widetilde \Psi$ for all CPTP maps $\Psi$, i.e., equation~\eqref{eq:psi-norm-1}. This inequality in the classical case is a simple consequence of H\"older's inequality, but is highly non-trivial in the quantum case. Here we prove~\eqref{eq:psi-norm-1} based on the theory of complex interpolation and the Riesz-Thorin theorem. Another import difference is that inequality~\eqref{eq:omega-norm-1} is indeed an equality in the classical case. This equality simplifies very much the analysis of hypercontractivity.
In the quantum case however the super-operator norm is known to not be multiplicative even on completely-positive (CP) maps. We thus suggest to in addition to the usual super-operator norm, consider the completely bounded norm which is multiplicative on CP maps.

By generalizing the idea of hypercontractivity to the quantum setting we prove the impossibility of transforming $n$ copies of $\rab^{(\alpha)}$ defined in~\eqref{eq:rho-dep-1} to  $\zeta_{UV}$ defined by~\eqref{eq:zeta-1} under local operations if $\alpha <(1-\log 2/\log 3)^{1/2}\simeq 0.6075$. This result does not seem to be reproducible by other methods in quantum information theory.

We also study the relation between the hypercontractivity ribbon with some other measures of correlation. In particular we show that maximal correlation $\mu(\rab)$ gives a bound on the hypercontractivity ribbon of $\rab$. This result in the classical case is due to Ahlswede and G\'acs~\cite{AhlswedeGacs}.

The rest of this paper is organized as follows. In the following section we review the required tools including H\"older's inequalities, completely-bounded norms, the Riesz-Thorin theorem and Choi-Jamio{\l}kowski isomorphism. Section~\ref{sec:hyper-ribbon} includes the main definitions and main results of this paper. In particular the hypercontractivity ribbon is defined in this section and a data processing type property is proved. Some properties of the hypercontractivity ribbon, in particular its relation to the maximal correlation is discussed in this section. In Section~\ref{sec:example} we compute the hypercontractivity ribbon for some examples, and explain how log-Sobolev inequalities can be used to compute the ribbon. Concluding remarks come in Section~\ref{sec:conclusion}.

\section{Preliminaries}\label{sec:pre}
In this paper we assume that the reader is familiar with basic notions of quantum information theory~\cite{NielsenChuang} such as Hilbert spaces, Dirac's notation, density matrices, CPTP maps etc. Here we just fix some notations. 

We denote the Hilbert space corresponding to quantum register $A$ by $\hilbert{A}$, which throughout this paper is assumed to be finite dimensional. $\lha$ is the space of linear
operators acting on $\hilbert A$.
The identity operate acting on $\hilbert A$ is denoted
by $I_A\in \lha$.

Throughout this paper we fix an orthonormal basis $\{\ket 0, \ket 1, \dots,
\ket{d_A-1}\}$ for the Hilbert space $\hilbert A$  (the computational basis), where $d_A=\dim \hilbert A$. The transpose of $X\in \lha$with respect to this basis is denoted by
$X^T$, and
$X^*=(X^{\dagger})^T$ where $X^\dagger$ is the adjoint of $X$.

For a hermitian operator $X\in \lha$ we let $X^{-1}$ to be the inverse $X$ restricted to the support of $X$, i.e., $X^{-1}X= XX^{-1}$ is the hermitian projection on the span of eigenvectors of $X$ corresponding to non-zero eigenvalues. Furthermore, by $X\geq Y$ (and $Y\leq X$) we mean that $X-Y$ is positive semi-definite.


\subsection{Schatten norms}\label{sec:Schatten}
For $p\geq 1$ the Schatten $p$-norm of $X\in \lha$ is defined by 
$$\|X\|_p = \tr{|X|^p}^{1/p},$$
where $|X|:= (X^{\dagger}X)^{1/2}$. For $p=\infty$ we let
$$\|X\|_{\infty} = \lim_{p\rightarrow \infty} \| X  \|_p,$$
which is equal to the usual operator norm of $X$: 
$$\|X\|_{\infty} = \sup\{ \|X\ket v\|: \, \ket v\in \hilbert A, \|\ket v\|=1   
\}.$$
$\|\cdot\|_p$ satisfies triangle inequality and is a norm on $\lha$. We clearly have $\|X^T\|_p =
\|X^*\|_p=\|X^{\dagger}\|_p = \|X\|_p$. Moreover $\|UXV\|_p = \|X\|_p$ for all unitary operators $U, V$.

H\"{o}lder's inequality states that for every $1\leq p\leq \infty$,
$$\|XY\|_1 \leq \|X\|_p\|Y\|_{p'},$$
where $1\leq p'\leq \infty$ is the H\"older conjugate of $p$, i.e., 
\begin{align}\label{eq:conjugate}
\frac 1 p +
\frac{1}{p'} = 1.
\end{align} 
Moreover, $\|\cdot\|_{p'}$ is the dual norm of $\|\cdot\|_p$,
i.e., for every $X$,
$$\|X\|_p  =\sup_{\|Y\|_{p'}=1}  
|\tr{XY}|.$$

A generalization of H\"older's inequality (see for example \cite{Bhatia-M})
states that for every $r, p, q>0$ with
$\frac{1}{r}=\frac{1}{p}+\frac{1}{q}$,
$$\|XY\|_r\leq \|X\|_p \|Y\|_q.$$
Then by a simple induction we have 
$$\|X_1\dots X_k\|_{r}\leq \|X_1\|_{p_1}\cdots \| X_k \|_{p_k},$$
for all $r, p_1, \dots, p_k>0$ with
 $\frac{1}{r} = \frac{1}{p_1} + \cdots +
\frac{1}{p_k}$.

For $1\leq p, q\leq \infty$ the super-operator norm of $\Phi:
\lha \rightarrow \lhb$ is defined by 
\begin{align*}
\| \Phi \|_{p\rightarrow q}:= \sup_{X\neq 0} \frac{\|\Phi(X)\|_q}{\|X\|_p}.
\end{align*}
When $\Phi$ is CP, the supremum could be taken over positive semi-definite $X$
\cite{JWatrous, KAudenaert, Szarek} (see also \cite{DevetakJKR}), i.e.\
\begin{equation}
\label{eq:norm-positive-input}
 \normt{\Phi}{p}{q} = \sup_{X>0} \frac{\norm{\Phi(X)}_q}{\norm{X}_p}.
\end{equation}

Observe that by definition 
$$\|\Phi(X)\|_q\leq \|\Phi\|_{p\rightarrow q} \|X\|_p,$$
for all $X\in \lha$. Moreover, 
$$\|\Phi\circ \Psi\|_{p\rightarrow q} \leq \|\Psi\|_{p\rightarrow r}\|\Phi\|_{r\rightarrow q}.$$

The super-operator norm is not multiplicative. That is although the inequality 
\begin{align}\label{eq:norm-multip}
\|\Phi \otimes \Psi\|_{p\rightarrow q} \geq \| \Phi \|_{p\rightarrow q} \|\Psi\|_{p\rightarrow q},
\end{align}
is easily verified, the reverse inequality does not hold in general. It is indeed well-known in quantum information theory that the super-operator norm is not multiplicative even in the case of $p=q=1$. To obtain a multiplicative super-operator norm we may consider the
\emph{completely bounded} norms.

\subsection{Completely bounded norms\label{sec:CB}}

In this section we review completely bounded norms. For details we refer the
reader to \cite{DevetakJKR} and references there including
\cite{Pisierbook, Jungethesis} .

As mentioned above the $1\rightarrow 1$ super-operator norm is not multiplicative. To make it multiplicative we usually define the completely bounded norm as 
$$\cbnorm{\Phi}{1}{1} =\sup_d \norm{\mathcal I_d\otimes \Phi}_{1\rightarrow 1}=\sup_d \sup_{X\neq 0} \frac{\norm{\mathcal I_d\otimes \Phi(X)}_1}{\norm X_1},$$
where by $\mathcal I_d$ we mean the identity super-operator acting on a $d$-dimensional Hilbert space. This norm is also called the diamond norm \cite{KitaevSV}.

The completely bounded norm $p\rightarrow q$ can be defined similarly when $p=q$. Nevertheless, when $p\neq q$ this definition does not make sense since the supremum may not even exist. The point is that the norm $\norm{\id_d}_{p\rightarrow q}$ of the identity operator is not equal to $1$. 

To overcome this problem the completely bounded norm can be defined by
\begin{align}\label{eq:CB-pq}
\cbnorm{ \Phi}{p}{q} : = \sup_d \|  \mathcal I_d\otimes \Phi 
\|_{(t,p)\rightarrow (t,q)} = \sup_d \sup_X \frac{\|\mathcal I_d\otimes
\Phi(X)\|_{(t, q)}}{\|X\|_{(t, p)}}.
\end{align}
Here the $d$-dimensional auxiliary space on which $\id_d$ acts, is equipped with the $t$-Schatten norm, while the input and output spaces of $\Phi$ are equipped with $p$ and $q$ norms respectively. In fact, for every $X_{AB}\in \lha\otimes \lhb$ its $(t, p)$-norm can be defined via the theory of non-commutative vector valued $L_p$ spaces~\cite{Pisierbook}. If $X_{AB} = Y_A\otimes Z_B$ then 
\begin{align}\label{eq:tp-norm-special}
\|Y_A\otimes Z_B \|_{(t, p)} = \norm{Y_A}_t \norm{Z_B}_p.
\end{align}
But the definition of the $(t, p)$-norm for a general $X_{AB}$ is complicated and is deferred to Appendix~\ref{app:CB} since we do not require the exact form here. We instead review some basic properties of completely bounded norms.

In~\eqref{eq:CB-pq} the choice of $1\leq t\leq \infty$ is arbitrary;  For all values of $t$ we get the same number. In fact the completely bounded norm had been first defined for $t=\infty$, but then Pisier~\cite{Pisierbook} showed that all values of $t$ result in the same operator norm.

From the definition of completely bounded norm~\eqref{eq:CB-pq} we have
$$\cbnorm{\Phi\circ \Psi}{p}{q} \leq \cbnorm{\Psi}{p}{r}\cbnorm{\Phi}{r}{q}.$$
Also by considering $d=1$ in the definition we find that
$$\normt{\Phi}{p}{q}\leq \cbnorm{\Phi} p q.$$

 In this paper we use the following important theorem proved in~\cite{DevetakJKR}.

\begin{theorem}\label{thm:DJKR}
\begin{itemize}
\item[\rm(a)] For CP super-operators $\Phi, \Psi$ and
$1\leq p, q\leq \infty$ we have
$$\cbnorm{\Phi\otimes \Psi}{p}{q} = \cbnorm{\Phi}p q \cbnorm{\Psi}p q.$$

\item[\rm(b)] If $\Phi$ is CP, the supremum in \eqref{eq:CB-pq}
for every $d$ is achieved at a positive semi-definite $X$. 

\item[\rm(c)] If $1\leq q\leq p\leq \infty$ and $\Phi$ is CP
then
$$\cbnorm{\Phi} p q = \|\Phi\|_{p\rightarrow q}.$$
As a result, by \rm{(a)} the super-operator norm $\|\cdot\|_{p\rightarrow q}$ is multiplicative on CP maps when $q\leq p$.
\end{itemize}
\end{theorem}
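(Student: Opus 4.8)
The plan is to treat the three parts in the order (b), (c), (a), since the easy halves of (c) and (a) are already recorded in the excerpt and the positive-input reduction of (b) feeds into the harder estimates. Throughout I would exploit the freedom, noted just above the statement, to evaluate the completely bounded norm with any convenient value of the auxiliary exponent $t$: Pisier's theorem guarantees that all $t$ give the same number, and choosing $t$ well is what makes the exponent bookkeeping close. For part (b), the map $\id_d\otimes\Phi$ is CP whenever $\Phi$ is, so I would adapt the argument behind \eqref{eq:norm-positive-input}: given an arbitrary optimizing $X$ on $\hilbert d\otimes\hilbert A$, replace it by a positive semidefinite operator with at least as large a norm ratio via a $2\times 2$ block construction, checking that the $(t,p)$ and $(t,q)$ norms of the block matrix control those of $X$ while the CP map only improves the output norm on the positive cone. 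This reduces every supremum in \eqref{eq:CB-pq} to positive semidefinite inputs.

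For part (c) the easy inequality $\normt{\Phi}{p}{q}\leq \cbnorm{\Phi}{p}{q}$ is already in the excerpt (take $d=1$), so the content is the reverse bound $\cbnorm{\Phi}{p}{q}\leq \normt{\Phi}{p}{q}$ when $q\leq p$. Here I would use the factorization description of Pisier's vector-valued norms: for a suitable choice of $t$ one can write any input $X$ on $\hilbert d\otimes\hilbert A$ as $X=(a\otimes I_A)\,W\,(b\otimes I_A)$, with $a,b$ acting only on the auxiliary register $\hilbert d$ and with $\|a\|$, $\|b\|$, $\|W\|$ controlled by $\|X\|_{(t,p)}$. Since $\Phi$ touches only $\hilbert A$, applying $\id_d\otimes\Phi$ commutes with left/right multiplication by $a\otimes I_A$ and $b\otimes I_A$, so $(\id_d\otimes\Phi)(X)=(a\otimes I_A)\,(\id_d\otimes\Phi)(W)\,(b\otimes I_A)$. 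Estimating the output $(t,q)$ norm by the generalized H\"older inequality pulls the factors $a,b$ out at the cost of their Schatten norms, leaving an inner term bounded by $\normt{\Phi}{p}{q}$; the exponents balance precisely under the hypothesis $q\leq p$. The multiplicativity claim in the last sentence of (c) is then immediate from part (a).

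For part (a) the inequality $\cbnorm{\Phi\otimes\Psi}{p}{q}\geq\cbnorm{\Phi}{p}{q}\,\cbnorm{\Psi}{p}{q}$ follows by feeding in tensor-product inputs across a product auxiliary space and invoking the product formula \eqref{eq:tp-norm-special}, so the real work is the submultiplicativity $\cbnorm{\Phi\otimes\Psi}{p}{q}\leq\cbnorm{\Phi}{p}{q}\,\cbnorm{\Psi}{p}{q}$. The strategy is an iteration that absorbs one factor at a time: given an input on $\hilbert d\otimes\hilbert{A_1}\otimes\hilbert{A_2}$, regard $\hilbert d\otimes\hilbert{A_2}$ as the auxiliary register for $\Phi$ and apply the CB bound for $\Phi$, then regard the resulting enlarged space as the auxiliary register for $\Psi$ and apply the CB bound for $\Psi$. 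Making this rigorous requires an associativity (Fubini-type) property of the vector-valued $(t,p)$ norms, namely that grouping the tensor factors in either order yields the same norm; this is what lets the single supremum over the combined auxiliary dimension in $\cbnorm{\Phi\otimes\Psi}{p}{q}$ split into the two separate suprema defining $\cbnorm{\Phi}{p}{q}$ and $\cbnorm{\Psi}{p}{q}$. Part (b) is convenient here, since restricting to positive inputs makes the regrouping and the suprema behave well.

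I expect the main obstacle to be supplying and correctly applying the factorization identity in (c) and the associativity identity in (a) for Pisier's vector-valued non-commutative $L_p$ norms, whose full definition the paper defers to the appendix. All the exponent arithmetic — in particular why the reverse bound in (c) needs exactly $q\leq p$, and why no analogue should hold for $q>p$ — is governed by these structural properties, and getting the conjugate exponents of the factoring operators $a,b$ to match the gap between $p$ and $q$ is the delicate point. The freedom to choose $t$ is the lever that makes these matchings possible.
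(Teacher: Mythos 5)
First, a point of reference: the paper does not prove Theorem~\ref{thm:DJKR} at all --- it is imported verbatim from~\cite{DevetakJKR} --- so your proposal has to be measured against that reference rather than against anything in the text. Your overall architecture (positivity reduction for (b), factorization of the input plus commuting the sandwich through $\id_d\otimes\Phi$ for (c), composition plus a Fubini-type regrouping for (a)) is indeed the skeleton of the Devetak--Junge--King--Ruskai/Pisier development, so the plan points in the right direction.

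However, the key estimate in part (c) --- which is the technical heart of the whole theorem --- does not close as you describe. Write the input as $X=(a\otimes I_A)W(b\otimes I_A)$ with $a,b$ on the auxiliary register, so that $(\id_d\otimes\Phi)(X)=(a\otimes I_B)Z(b\otimes I_B)$ with $Z=(\id_d\otimes\Phi)(W)$. If you now pull $a,b$ out by the generalized H\"older inequality with exponents $\frac1q=\frac1{2r}+\frac1p+\frac1{2r}$, what actually comes out is $\norm{a\otimes I_B}_{2r}=d_B^{1/(2r)}\norm{a}_{2r}$, not $\norm{a}_{2r}$: a dimension factor that destroys the bound. If instead you take the outer H\"older exponents to be $\infty$ so as to avoid dimension factors, you get $\norm{a}_\infty\norm{b}_\infty$, which does not match the weights $\norm{a}_{2r}\norm{b}_{2r}$ appearing in the factorization description of $\norm{X}_{(t,p)}$; and no choice of $t$ repairs the mismatch, since the weight exponent $r$ forced on the input side satisfies $1/t=1/p+1/r$ while the one usable on the output side satisfies $1/t=1/q+1/r$, and both cannot hold when $p\neq q$. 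Worse, in every variant the ``inner term'' is $\norm{(\id_d\otimes\Phi)(W)}_q$ (or a $p$-norm of it), and bounding it by $\normt{\Phi}{p}{q}\norm{W}_p$ is exactly the statement $\normt{\id_d\otimes\Phi}{p}{q}\leq\normt{\Phi}{p}{q}$, i.e., the plain-norm multiplicativity on CP maps that parts (a) and (c) are supposed to \emph{deliver} --- invoking it here is circular. Closing this gap is where the real work of~\cite{DevetakJKR} lies: positivity must be exploited quantitatively (e.g., via an Araki--Lieb--Thirring-type inequality followed by H\"older against a partial trace, or via the duality/sup characterization of the $(\infty,q)$-norm, property (g) of Appendix~\ref{app:CB}), not just used to commute the sandwich. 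A similar caveat applies to (a): after $\Phi$ acts on its leg, the auxiliary register for the second step carries a mixed $(t,q)$-norm rather than a Schatten norm, so the ``absorb one factor at a time'' iteration needs Pisier's theory of completely bounded maps on operator-space-valued $L_p$ spaces, not merely a regrouping identity for $(t,p)$-norms; you flag this dependence, but it is a theorem you would have to import, at which point one may as well import Theorem~\ref{thm:DJKR} itself.
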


\subsection{Riesz-Thorin theorem}

Most of our results are based on the theory of complex interpolation. Here we do not need this theory in detail, so we just give a brief review. For more details we refer the reader to~\cite{BerghL, Lunardi}.

Let $\mathcal V_0$ and $\mathcal V_1$ be two (complex) Banach spaces, i.e., two normed vector spaces that are complete under their norms. Moreover suppose that $\mathcal V_0, \mathcal V_1$ both can be embedded into a larger vector space. In this case $(\mathcal V_0, \mathcal V_1)$ is called an \emph{interpolation couple}.
Then in the theory of complex interpolation for every $0\leq \theta\leq 1$, a Banach space is constructed that is somehow an intermediate space in between $\mathcal V_0$ and $\mathcal V_1$. These spaces are denoted by 
$$[\mathcal V_0, \mathcal V_1]_{\theta}.$$

The main example of interpolation spaces is Schatten classes. Let $\mathbf L_p(\mathcal H)$ be the space of linear operators on the Hilbert space $\mathcal H$ equipped with the $p$-norm. Then it is well-known that for $1\leq p_0\leq p_1\leq \infty$ we have
\begin{align}\label{eq:interp-p-q}
[\mathbf L_{p_0}(\mathcal H), \mathbf L_{p_1}(\mathcal H)]_{\theta} = \mathbf L_{p_\theta}(\mathcal H),
\end{align}
where $p_\theta$ is given by
\begin{align}\label{eq:p-theta}
 \frac{1}{p_\theta} = \frac{1-\theta}{p_0} + \frac{\theta}{p_1}.
\end{align}

The other important example is the interpolation of $(p, q)$-norms mentioned in the previous section. If we let $\mathbf L_{(p,q)}(\mathcal H_{AB})$ to be the space $\mathbf L(\mathcal H_{AB})=\lha\otimes \lhb$ equipped with the $(p, q)$-norm then we have
$$[\mathbf L_{(p_0, q_0)}(\mathcal H_{AB}), \mathbf L_{(p_1, q_1)}(\mathcal H_{AB})]_{\theta} = \mathbf L_{(p_\theta, q_\theta)}(\mathcal H_{AB}),$$
where again $p_\theta$ and similarly $q_\theta$ are defined by~\eqref{eq:p-theta}.

The main result that we require from the theory of interpolation is the following variant of the Riesz-Thorin theorem taken from~\cite{Lunardi}. Here for ease of notation we state the theory only in the case where the corresponding Banach spaces (as sets) are subsets of finite dimensional matrices. 

Let
$$S:=\{z\in \mathbb C:  0\leq \text{Re} z\leq 1\}.$$ 
A map $f: S\rightarrow \mathbf L(\mathcal H)$ is called {holomorphic} (bounded, continuous) if the corresponding maps to matrix entries are holomorphic (bounded, continuous). 

\begin{theorem}
\label{thm:reisz-thorin}
Let $\mathcal V_0, \mathcal V_1\subseteq \mathbf L(\mathcal H_A)$ and $\mathcal W_0, \mathcal W_1\subseteq \lhb$ be interpolations couples. Suppose that for every $z\in S$ we have a super-operator 
$T_z: \lha \rightarrow \lhb$ such that for every $X\in \lha$ the map $z\mapsto T_z(X)$ is holomorphic and bounded in the interior of $S$ and continuous on the boundary. Then we may consider $T_z$ as a map from $\mathcal V_k$ to $\mathcal W_k$ (k=0,1) and consider its super-operator norm
$$\|T_z\|_{\mathcal V_k\rightarrow \mathcal W_k} = \sup_{X} \frac{\|T_z(X)\|_{\mathcal W_k}}{\| X \|_{\mathcal V_k}}.$$
Define 
\[
 M_0 = \sup_{t\in \reals} \normt{T_{it}}{\mathcal V_0}{\mathcal W_0}, \qquad M_1 = \sup_{t\in
\reals} \normt{T_{1+it}}{\mathcal V_1}{\mathcal W_1}.
\]
Then for $0\leq \theta \leq1$ we have
\[
 \normt{T_\theta}{\mathcal V_\theta}{\mathcal W_\theta} \leq M_0^{1-\theta} M_1^\theta,
\]
where
$
\mathcal V_\theta=[\mathcal V_0, \mathcal V_1]_{\theta}$ and $
\mathcal W_\theta=[\mathcal W_0, \mathcal W_1]_{\theta}.
$
\end{theorem}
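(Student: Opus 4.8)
\section*{Proof proposal}

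The plan is to reduce the statement to the definition of the complex interpolation norm together with the Hadamard three-lines principle, which is already built into that definition. Recall Calder\'on's construction of $[\mathcal V_0,\mathcal V_1]_\theta$: let $\mathcal F$ be the space of maps $f\colon S\to \mathcal V_0+\mathcal V_1$ that are bounded and continuous on $S$, holomorphic in the interior, with $t\mapsto f(it)$ bounded continuous into $\mathcal V_0$ and $t\mapsto f(1+it)$ bounded continuous into $\mathcal V_1$, normed by $\|f\|_{\mathcal F}=\max\{\sup_t\|f(it)\|_{\mathcal V_0},\ \sup_t\|f(1+it)\|_{\mathcal V_1}\}$. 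Then $\mathcal V_\theta$ consists of the values $f(\theta)$, with $\|X\|_{\mathcal V_\theta}=\inf\{\|f\|_{\mathcal F}:f(\theta)=X\}$; let $\mathcal G$ denote the analogous space of extensions for the couple $(\mathcal W_0,\mathcal W_1)$. The whole point of this construction is that the three-lines lemma is what makes $\|\cdot\|_{\mathcal V_\theta}$ a genuine norm, so I can invoke the $\inf$-definition directly instead of re-deriving the analytic estimate.

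First I would fix $X\in\mathcal V_\theta$ and $\epsilon>0$ and choose a near-optimal extension $f\in\mathcal F$ with $f(\theta)=X$ and $\|f\|_{\mathcal F}\le \|X\|_{\mathcal V_\theta}+\epsilon$. The central construction is then to transport $f$ through the analytic family by setting $g(z)=T_z(f(z))$ and checking that $g\in\mathcal G$. Here is where the hypotheses on $T_z$ are used: because we are in finite dimensions, each matrix entry of $g(z)$ is a finite sum of products of a holomorphic entry of (the matrix of) $T_z$ with a holomorphic entry of $f(z)$, so $g$ is holomorphic in the interior; boundedness and boundary continuity of $g$ follow from the corresponding properties of $T_z$ and $f$ assumed in the statement.

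Next I would convert the interpolation-norm bound, which naturally produces a maximum of $M_0$ and $M_1$, into the claimed geometric mean by an exponential rescaling. Assuming for the moment $0<M_0,M_1<\infty$, set $\gamma=\log(M_0/M_1)$ and $h(z)=e^{(z-\theta)\gamma}g(z)$. Since $\gamma$ is real the scalar factor is entire with modulus bounded on $S$, so $h\in\mathcal G$ and $h(\theta)=T_\theta(X)$. On the line $\operatorname{Re}z=0$ one has $\|h(it)\|_{\mathcal W_0}=e^{-\theta\gamma}\|T_{it}(f(it))\|_{\mathcal W_0}\le e^{-\theta\gamma}M_0\|f\|_{\mathcal F}=M_0^{1-\theta}M_1^{\theta}\|f\|_{\mathcal F}$, and on $\operatorname{Re}z=1$ one gets $\|h(1+it)\|_{\mathcal W_1}\le e^{(1-\theta)\gamma}M_1\|f\|_{\mathcal F}=M_0^{1-\theta}M_1^{\theta}\|f\|_{\mathcal F}$, so that $\|h\|_{\mathcal G}\le M_0^{1-\theta}M_1^{\theta}\|f\|_{\mathcal F}$. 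By the definition of the $\mathcal W_\theta$-norm as an infimum over extensions, $\|T_\theta(X)\|_{\mathcal W_\theta}\le \|h\|_{\mathcal G}\le M_0^{1-\theta}M_1^{\theta}(\|X\|_{\mathcal V_\theta}+\epsilon)$; letting $\epsilon\to0$ and taking the supremum over $X\neq0$ gives the claim.

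I expect the main obstacle to be the verification that $g(z)=T_z(f(z))$ genuinely lands in $\mathcal G$ --- this is the only place where analyticity and boundedness must be combined with the abstract extension $f$ supplied by the interpolation space, and it is delicate to state cleanly even though it is routine in finite dimensions. A secondary point requiring care is the degenerate cases $M_0\in\{0,\infty\}$ or $M_1\in\{0,\infty\}$: when either is infinite the bound is vacuous, and when either vanishes one replaces it by a small $\delta>0$ in the rescaling and lets $\delta\to0$. Everything else is bookkeeping, since the deep analytic content, the three-lines lemma, is absorbed into the Calder\'on construction of $[\mathcal V_0,\mathcal V_1]_\theta$ that the theorem's statement already presupposes.
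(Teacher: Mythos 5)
Your proof is correct, but it takes a genuinely different route from the paper's own treatment. In fact the paper never proves Theorem~\ref{thm:reisz-thorin} in this abstract form --- it cites Lunardi for it --- and what it does prove (Appendix~\ref{app:reisz-thorin-schatten}) is only the Schatten-norm special case, Theorem~\ref{thm:reisz-thorin-special}: there, H\"older duality reduces the operator bound to a bound on a scalar function, namely $f(z)=\tr{Y^{\frac{1-z}{q_0'}+\frac{z}{q_1'}}\,T_z\big(X^{\frac{1-z}{p_0}+\frac{z}{p_1}}\big)}$ built from singular value decompositions of $X$ and $Y$, and Hadamard's three-line theorem is applied directly to $f$. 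You instead argue at the level of the abstract couples: take a near-optimal Calder\'on extension $f$ of $X$, push it through the analytic family via $g(z)=T_z(f(z))$, correct the constants with the entire factor $e^{(z-\theta)\gamma}$, and conclude from the infimum definition of the $\mathcal W_\theta$-norm. What your route buys is exactly the generality the statement asserts (arbitrary interpolation couples, no complex matrix powers to manipulate), with the analytic content absorbed into the definition of $[\cdot,\cdot]_\theta$; your verification that $g$ lies in the extension space (entrywise holomorphy and uniform boundedness of $z\mapsto T_z$, which in finite dimensions follows by applying the hypothesis to a basis of $\lha$) and your treatment of the degenerate cases $M_k\in\{0,\infty\}$ are both sound. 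What the paper's route buys is self-containedness for the case it actually needs: the duality/three-lines argument uses no interpolation-space machinery at all, whereas your proof, in order to yield the concrete Theorem~\ref{thm:reisz-thorin-special} that the rest of the paper invokes, must additionally import the identification $[\mathbf L_{p_0}(\mathcal H),\mathbf L_{p_1}(\mathcal H)]_{\theta}=\mathbf L_{p_\theta}(\mathcal H)$ of~\eqref{eq:interp-p-q}, which is itself a nontrivial theorem whose proof again rests on three-lines-type estimates.
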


Using~\eqref{eq:interp-p-q} this theorem in particular gives the following.

\begin{theorem}
\label{thm:reisz-thorin-special}
 For every $z \in S$, assume that $T_z: \mathbf L(\mathcal H_A) \rightarrow \mathbf L(\mathcal H_B)$ is a
linear operator such that for a fixed $X$, $z \mapsto T_z(X)$ is holomorphic and bounded in the
interior of $S$ and continuous on the boundary. Then for $0\leq \theta \leq 1$ we have
\[
 \normt{T_\theta}{p_\theta}{q_\theta} \leq \left(  \sup_{t\in \reals} \normt{T_{it}}{p_0}{q_0}        \right)^{1-\theta} \left(   \sup_{t\in
\reals} \normt{T_{1+it}}{p_1}{q_1} \right)^\theta,
\]
where $p_\theta, q_{\theta}$ are defined by~\eqref{eq:p-theta}.
\end{theorem}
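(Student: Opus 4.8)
The plan is to derive Theorem~\ref{thm:reisz-thorin-special} as a direct specialization of the abstract interpolation result, Theorem~\ref{thm:reisz-thorin}, by taking the interpolation couples to be Schatten classes on the input and output sides. Concretely, I would set $\mathcal V_0 = \mathbf L_{p_0}(\mathcal H_A)$ and $\mathcal V_1 = \mathbf L_{p_1}(\mathcal H_A)$ on the input side, and $\mathcal W_0 = \mathbf L_{q_0}(\mathcal H_B)$ and $\mathcal W_1 = \mathbf L_{q_1}(\mathcal H_B)$ on the output side. Each pair is a legitimate interpolation couple: as sets, $\mathbf L_{p_0}(\mathcal H_A)$ and $\mathbf L_{p_1}(\mathcal H_A)$ are both just the finite-dimensional space $\mathbf L(\mathcal H_A)$ carrying different norms, so both embed trivially into the common ambient space $\mathbf L(\mathcal H_A)$, and likewise on the $B$ side. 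The holomorphy, boundedness and continuity hypotheses imposed on $z\mapsto T_z(X)$ in Theorem~\ref{thm:reisz-thorin-special} are verbatim those required by Theorem~\ref{thm:reisz-thorin}, so no analytic condition needs to be re-verified.

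With these choices, the second step is to identify the intermediate spaces. By the Schatten-class interpolation identity~\eqref{eq:interp-p-q} together with the exponent formula~\eqref{eq:p-theta}, we obtain $\mathcal V_\theta = [\mathbf L_{p_0}(\mathcal H_A), \mathbf L_{p_1}(\mathcal H_A)]_\theta = \mathbf L_{p_\theta}(\mathcal H_A)$ and likewise $\mathcal W_\theta = \mathbf L_{q_\theta}(\mathcal H_B)$, where $p_\theta$ and $q_\theta$ are exactly the exponents appearing in the claim. The boundary quantities of Theorem~\ref{thm:reisz-thorin} then specialize to $M_0 = \sup_{t\in\reals}\normt{T_{it}}{p_0}{q_0}$ and $M_1 = \sup_{t\in\reals}\normt{T_{1+it}}{p_1}{q_1}$, and the conclusion $\normt{T_\theta}{\mathcal V_\theta}{\mathcal W_\theta}\le M_0^{1-\theta}M_1^\theta$ reads precisely as the desired $\normt{T_\theta}{p_\theta}{q_\theta}\le M_0^{1-\theta}M_1^\theta$.

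The only point requiring care — and the step I expect to be the main (minor) obstacle — is that the identity~\eqref{eq:interp-p-q} was stated under the ordering $1\le p_0\le p_1\le\infty$, whereas the present statement imposes no ordering on $p_0, p_1$ or on $q_0, q_1$. I would handle an arbitrary ordering by invoking the symmetry of the complex interpolation functor, $[\mathcal V_0, \mathcal V_1]_\theta = [\mathcal V_1, \mathcal V_0]_{1-\theta}$, used purely as an identity between interpolation spaces in order to evaluate $\mathcal V_\theta$. If, say, $p_0 > p_1$, then $[\mathbf L_{p_0}(\mathcal H_A), \mathbf L_{p_1}(\mathcal H_A)]_\theta = [\mathbf L_{p_1}(\mathcal H_A), \mathbf L_{p_0}(\mathcal H_A)]_{1-\theta}$, and the right-hand side is governed by~\eqref{eq:interp-p-q} with exponent $p'$ satisfying $\tfrac1{p'} = \tfrac{\theta}{p_1} + \tfrac{1-\theta}{p_0} = \tfrac1{p_\theta}$; since~\eqref{eq:p-theta} is symmetric under simultaneously swapping the endpoints and sending $\theta\mapsto 1-\theta$, the intermediate exponent is unchanged. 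Because this symmetry is applied only to evaluate the spaces $\mathcal V_\theta$ and (separately) $\mathcal W_\theta$, and does not alter the single parameter $\theta$ fed into Theorem~\ref{thm:reisz-thorin}, the identifications $\mathcal V_\theta = \mathbf L_{p_\theta}(\mathcal H_A)$ and $\mathcal W_\theta = \mathbf L_{q_\theta}(\mathcal H_B)$ remain mutually compatible, and the conclusion then follows immediately from Theorem~\ref{thm:reisz-thorin}.
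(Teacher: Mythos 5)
Your proposal is correct, and your handling of the ordering issue via the symmetry $[\mathcal V_0,\mathcal V_1]_\theta=[\mathcal V_1,\mathcal V_0]_{1-\theta}$ (together with the invariance of~\eqref{eq:p-theta} under swapping endpoints and $\theta\mapsto 1-\theta$) is a legitimate point of care that the paper glosses over. Note, though, that what you have written is essentially the paper's one-line main-text justification (``Using~\eqref{eq:interp-p-q} this theorem in particular gives the following''), whereas the paper's actual \emph{proof} of Theorem~\ref{thm:reisz-thorin-special}, given in Appendix~\ref{app:reisz-thorin-schatten}, takes a genuinely different and self-contained route: it never invokes the abstract interpolation machinery at all. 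Instead it reduces, by H\"older duality, to bounding $|\tr{Y^{1/q'_\theta}T_\theta(X^{1/p_\theta})}|$ for $\|X\|_1=\|Y\|_1=1$ (non-integer powers defined through the singular value decomposition), introduces the scalar function $f(z)=\tr{Y^{\frac{1-z}{q'_0}+\frac{z}{q'_1}}\,T_z(X^{\frac{1-z}{p_0}+\frac{z}{p_1}})}$, applies Hadamard's three-line theorem to $f$, and controls the boundary values $|f(it)|$, $|f(1+it)|$ using H\"older's inequality and unitary invariance of Schatten norms. The trade-off is clear: your derivation is shorter but rests on two black boxes --- Theorem~\ref{thm:reisz-thorin}, which the paper states without proof, and the Schatten interpolation identity~\eqref{eq:interp-p-q}, whose own proof is essentially the same three-line argument --- while the appendix proof is elementary, self-contained, and (as the paper remarks) exhibits the main ideas behind the general Riesz--Thorin theorem rather than consuming it as an axiom.
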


In Appendix~\ref{app:reisz-thorin-schatten} we give a proof of the Riesz-Thorin theorem in the above special case. The proof in this case however captures the main ideas behind the Riesz-Thorin theorem in the general case.


\subsection{Choi-Jamio{\l}kowski isomorphism}

We denote the unnormalized maximally entangled state by
$$\ket{\chi}_{AA'} = \sum_{i=0}^{d_A-1} \ket i_A\ket i_{A'},$$
where $\{\ket 0, \ket 1, \dots, \ket{d_A-1}\}$ is a fixed orthonormal basis for
$\hilbert A$, and $\hilbert{A'}$ is a copy of $\hilbert{A}$.
For a given $\eta_{AB}\in \lha\otimes \lhb$ we may consider the corresponding
super-operator $\Omega_{\eta}: \lha\rightarrow \lhb$ via the
Choi-Jamio{\l}kowski isomorphism:
\begin{align*}
\eta_{AB} = \mathcal I_A\otimes \Omega_{\eta} \left(  \ket\chi\bra\chi_{AA'}  
\right) = \sum_{i,j=0}^{d_A-1} \ket i\bra j\otimes \Omega_{\eta}(\ket i\bra j).
\end{align*}
Then for every $X\in \lha$ we have 
\begin{align}\label{eq:omega-eta}
\Omega_\eta(X) = \trA{\eta_{AB}(X^T\otimes I_B)},
\end{align}
where $\text{tr}_A$ denotes the partial trace with respect to subsystem $A$.
This in particular implies that 
\begin{align}\label{eq:omega-eta-tr}
\tr{Y\Omega_\eta(X)} = \tr{ \eta_{AB}(X^T\otimes Y)}.
\end{align}
Moreover $\Omega_{\eta}$ is CP if and only if $\eta$ is
positive semi-definite.  Also observe that $\Omega_{\eta\otimes \eta'} = \Omega_{\eta}\otimes \Omega_{\eta'}$.

Using \eqref{eq:omega-eta-tr} and H\"older's duality we have
\begin{align}\label{eq:omega-eta-norm}
\normt{\Omega_\eta}p q  = \sup_{\|X\|_p=\norm Y_{q'}=1} |\tr{\eta_{AB}(X\otimes
Y)}|.
\end{align}
Similarly the completely bounded norm is computed as
\begin{align}\label{eq:omega-eta-cbnorm}
\cbnorm{\Omega_{\eta}} p q = \sup_d \sup_{\norm X_{(t,p)}=\norm Y_{(t', q')} =1}
|\tr{(\ket{\chi}\bra{\chi}_{CC'}\otimes \eta_{AB})(X_{CA}\otimes Y_{C'B})}|,
\end{align}
where $\hilbert C = \hilbert{C'}$ 
is a Hilbert space with dimension $d$. Here we use the fact that the dual norm
of $(t,p)$ is $(t',p')$ (see Appendix~\ref{app:CB} for more details) as
well as the fact that $\ket{\chi}\bra{\chi}$ is the Choi-Jamio{\l}kowski
representation of identity super-operator.

We now have all the tools required to state our results.

\section{The hypercontractivity ribbon}\label{sec:hyper-ribbon}

For a given bipartite state $\rab\in \lha\otimes \lhb$ and $1\leq p,q\leq \infty$ let
\begin{equation}
\label{eq:rtab}
 \rtabpq = \left ( \ra^{-\frac 1 {2p}} \otimes \rb^{-\frac 1 {2q}} \right )
\rab \left ( \ra^{-\frac 1 {2p}} \otimes \rb^{-\frac 1 {2q}} \right ).
\end{equation}
When it is clear from the context, we will drop the superscript $(p,q)$ and
subscript $AB$ and simply write $\rt$. From the definition the `tilde' operator corresponding to $\rab^{\otimes n}$ is equal to $\rtab^{\otimes n}$.

Since $\rtab$ is in  $ \lha \otimes \lhb$ we may
consider
the corresponding super-operator $\Omega_{\rt}: \lha \rightarrow \lhb$ via the
Choi-Jamio{\l}kowski representation:
\begin{equation}
\label{eq:rtab-choi}
 \rtab = \sum_{i,j = 0}^{d_A - 1} \ket{i} \bra{j}_A \otimes
\Omega_{\rt}(\ket{i}\bra{j})_B.
\end{equation}
Observe that since $\rtab$ for every density matrix $\rab$ is positive semi-definite, the corresponding super-operator
$\Omega_{\rt}$ is CP. 

\begin{definition}
For every bipartite quantum state $\rab$ and integer $n\geq 1$ define
\begin{equation}\label{eq:ribbon-def-1}
 \ribhc{A}{B}^{(n)}(\rab) := \left \{ (p,q') \in \reals^2 : q'\geq p \geq 1,
\normt{\Omega_{\rtab^{(p,q)}}^{\otimes n}}{p}{q'} 
\leq 1 \right \},
\end{equation}
and let
\begin{equation}\label{eq:ribbon-def-infty}
 \ribhc{A}{B}(\rab) := \bigcap_{n\geq 1} \ribhc{A}{B}^{(n)}(\rab).
\end{equation}
Moreover define
\begin{equation}\label{eq:ribbon-def-2}
  \ribcb{A}{B}(\rab) := \left \{ (p,q') \in \reals^2 :  q'\geq p \geq 1,
\cbnorm{\Omega_{\rtabpq}}{p}{q'} \leq 1 \right \}.
\end{equation}
Here $q'$ is the H\"older conjugate of $q$ defined by~\eqref{eq:conjugate}.
Following~\cite{KamathAnantharam, Anantharametal} we call $\rib A B(\rab)$ and $\ribcb A B(\rab)$ the hypercontractivity ribbons (HR).
\end{definition}

From the definition and~\eqref{eq:norm-multip} it is clear that 
$$\ribhc{A}{B}^{(nk)}(\rab)\subseteq \ribhc{A}{B}^{(k)}, $$
and then
\begin{align}\label{eq:tensorize-rib}
\ribhc A B(\rab^{\otimes k}) = \ribhc A B(\rab).
\end{align}
Moreover, by Theorem~\ref{thm:DJKR} the completely bounded norm is multiplicative on CP maps. Then we also have 
\begin{align}\label{eq:tensorize-cbrib}
\ribcb A B(\rab^{\otimes k}) = \ribcb A B(\rab).
\end{align}

Using~\eqref{eq:omega-eta-norm}, for a pair $(p, q')$ we have $(p, q')\in \ribhc{A}{B}^{(1)}(\rab)$ if and only if 
\begin{align}\label{eq:231}
\tr{\rtab^{(p, q)} X\otimes Y}\leq 1,
\end{align}
for all $X, Y$ such that $\norm X_p=\norm Y_q=1$. From this equation it is clear
that if we similar to $\Omega_{\rt^{(p,q)}}$ define a super-operator
$\Lambda_{\rt}:\lhb\rightarrow \lha$, then
$\norm{\Omega_{\rt^{(p,q)}}}_{p\rightarrow q'}\leq 1$ if and only if
$\norm{\Lambda_{\rt^{(p,q)}}}_{q\rightarrow p'}\leq 1$. 
In fact we may define the hypercontractivity ribbon \emph{from $B$ to $A$} by 
\begin{align*}
\ribhc{B}{A}^{(n)}(\rab) = \left\{ (q,p') \in \reals^2 : p' \geq q \geq 1,
\normt{\Lambda_{\rtab^{(p,q)}}^{\otimes n}}{q}{p'} 
\leq 1 \right\}.
\end{align*}
We then have 
\begin{align}\label{eq:ribbon-duality-1}
 (p, q')\in \ribhc{A}{B}^{(n)}(\rab) \quad \text{ iff }  \quad (q, p')\in \ribhc{B}{A}^{(n)}(\rab).
\end{align}
The same argument goes through for the completely bounded norm using~\eqref{eq:omega-eta-cbnorm}, so we have
\begin{align}\label{eq:ribbon-duality-2}
(p, q')\in \ribcb A B(\rab) \quad \text{ iff }  \quad (q, p')\in \ribcb B A(\rab).
\end{align}

\begin{remark}\label{rem:below-the-line}
In definitions~\eqref{eq:ribbon-def-1} and~\eqref{eq:ribbon-def-2} we put the restriction $q'\geq p$ which seems unnecessary. In 
Theorem~\ref{thm:below-the-line} we will justify this assumption by showing that for $q'\leq p$ the super-operator norms
$\normt{\Omega_{\rt^{(p, q)}}^{\otimes n}}{p}{q'}$ and $\cbnorm{\Omega_{\rt^{(p, q)}}}{p}{q'}$ are always equal to $1$, and give no information about $\rab$.
\end{remark}

\begin{remark} By the definition of the completely bounded norm and Theorem~\ref{thm:DJKR} we have 
 \begin{equation*}
 \normt{\Omega_{\rtab}^{\otimes n}}{p}{q'} \leq \cbnorm{\Omega_{\rtab}^{\otimes n}}{p}{q'} = \cbnorm{\Omega_{\rtab}}{p}{q'}^n.
\end{equation*}
Therefore if $(p, q')\in \ribcb{A}{B}{(\rab)}$ then $(p, q')\in \ribhc{A}{B}{(\rab)}$. In fact we always have
\begin{align}\label{eq:hc-subset-cb}
 \ribcb{A}{B}{(\rab)}\subseteq \ribhc{A}{B}{(\rab)}.
\end{align}
\end{remark}

\begin{remark}
Observe that
$$\tr{\rt^{(p,q)} \rho_A^{\frac 1 p}\otimes \rho_B^{\frac 1 q}}=1.$$
This means that in~\eqref{eq:231} if we let $X=\rho_A^{1/p}$ and
$Y=\rho_B^{1/q}$ we get equality. As a result for all $p, q$ both 
$\normt{\Omega_{\rt}} p {q'}$ and $\cbnorm{\Omega_{\rt}} p {q'}$ are at least
$1$. Therefore $(p, q')\in \ribhc A B(\rab)$ and $(p,q')\in \ribcb A
B(\rab)$ indeed mean $\normt{\Omega_{\rt}} p {q'}=1$ and $\cbnorm{\Omega_{\rt}}
p {q'}=1$ respectively. 
\end{remark}

\begin{remark} By Theorem \ref{thm:CBnorm-classical} the two ribbons $\ribhc A B(\rab)$ and $\ribcb A B(\rab)$ coincide when $\rab$ is classical.
\end{remark}

\subsection{Hypercontractivity ribbons under CPTP maps}

In this section by studying the behavior of HRs under CPTP maps we show that they are indeed measures of correlation. But before that we need to derive an expression for $\Omega_{\rt}$.

By definition we have 
\begin{equation}
\label{eq:rab-choi}
 \rab = \sum_{i,j=0}^{d_A-1} \ket{i} \bra{j} \otimes \Omega_\rho(\ket{i} \bra{j})
\end{equation}
Therefore,
\begin{align*}
\rtab &= (\ra^{-\frac1 {2p}} \otimes \rb^{-\frac 1 {2q}}) \rab (\ra^{-\frac 1 {2p}} \otimes
\rb^{-\frac 1 {2q}})  \\
&= \sum_{i,j} \ra^{-\frac 1{2p}} \ket{i}\bra{j} \ra^{-\frac 1{2p}} \otimes \rb^{-\frac 1{2q}}
\Omega_\rho ( \ket{i}\bra{j} ) \rb^{-\frac 1{2q}} \\
&= \sum_{i,j,k,l} \ket{k}\bra{k} \ra^{-\frac 1{2p}} \ket{i}\bra{j} \ra^{-\frac 1{2p}} \ket{l}
\bra{l} \otimes \rb^{-\frac 1{2q}} \Omega_\rho (\ket{i} \bra{j} ) \rb^{-\frac 1{2q}} \\
&= \sum_{i,j,k,l} \ket{k}\bra{i} \ras^{-\frac 1{2p}} \ket{k}\bra{l} \ras^{-\frac 1{2p}}
\ket{j} \bra{l} \otimes \rb^{-\frac 1{2q}} \Omega_\rho (\ket{i} \bra{j} ) \rb^{-\frac 1 {2q}}
\\
&= \sum_{i,j,k,l} \ket{k}   \bra{l} \otimes \rb^{-\frac 1{2q}} \Omega_\rho
(\ket{i}\bra{i} \ras^{-\frac 1{2p}} \ket{k}  \bra{l} \ras^{-\frac 1{2p}} \ket{j} \bra{j} )
\rb^{-1/2q} \\
&= \sum_{k,l} \ket{k}   \bra{l} \otimes \rb^{-\frac1 {2q}} \Omega_\rho ( \ras^{-\frac 1{2p}}
\ket{k}  \bra{l} \ras^{-\frac 1{2p}}) \rb^{-\frac1 {2q}}.
\end{align*}
Comparing to \eqref{eq:rtab-choi} we conclude that
\begin{equation*}
\Omega_{\rt}(X) = \rb^{-\frac 1{2q}} \Omega_\rho ( \ras ^{-\frac 1{2p}} X \ras^{-\frac 1{2p}} )
\rb^{-\frac 1{2q}}.
\end{equation*}

For positive semi-definite $\tau\in \lha$ define the super-operator $\Gamma_\tau:\lha
\rightarrow \lha$ by 
$$\Gamma_{\tau} (X) = \tau^{\frac{1}{2}} X \tau^{\frac 1 2}.$$
Then observe that $\Gamma_{\tau}^{\alpha} (X)= \tau^{\alpha/{2}} X
\tau^{\alpha/2}$ for all $\alpha$.
Using this notation we have
\begin{align}\label{eq:choi-rt-choi-r}
\Omega_{\rt}= \Gamma_{\rho_B}^{-\frac{1}{q}}\circ \Omega_{\rho}\circ \Gamma_{\rho_A^*}^{-\frac 1 {p}}.
\end{align}

Before proving the main result of this section we need the following important lemma.

\begin{lemma}\label{lem:norm-psi-tilde} For a CPTP map $\Phi$, density matrix $\tau$ and $1\leq p,q\leq \infty$
define 
\begin{align}\label{eq:Psi-tilde}
\widetilde\Phi=\widetilde{\Phi}^{(p,q)} = \Gamma_{\Phi(\tau)}^{-\frac 1 q} \circ \Phi\circ   \Gamma_{\tau}^{\frac{1}{p}}.
\end{align}
Then if $  q\geq p$,
\begin{align*}
\normt{\widetilde{\Phi}}{p'}{q'}= \cbnorm{\widetilde{\Phi}}{p'}{q'}\leq 1.
\end{align*}
\end{lemma}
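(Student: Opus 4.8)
The plan is to reduce the statement to two clean ingredients: the equality of the completely bounded and ordinary norms, and a one-parameter complex interpolation. First observe that $\widetilde\Phi$ is a composition of completely positive maps---namely $\Gamma_\tau^{1/p}$ and $\Gamma_{\Phi(\tau)}^{-1/q}$ (each a conjugation $X\mapsto KXK$ by a fixed positive operator $K$) together with the CP map $\Phi$---so $\widetilde\Phi$ is itself CP. Moreover $q\geq p$ gives $q'\leq p'$. Hence Theorem~\ref{thm:DJKR}(c) applies with input exponent $p'$ and output exponent $q'\leq p'$ and yields $\cbnorm{\widetilde\Phi}{p'}{q'}=\normt{\widetilde\Phi}{p'}{q'}$ for free. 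It therefore remains only to prove the single inequality $\normt{\widetilde\Phi}{p'}{q'}\leq 1$.

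Next I would reduce the general case $q\geq p$ to the diagonal case $p=q$. Since $\Gamma_\sigma^a\circ\Gamma_\sigma^b=\Gamma_\sigma^{a+b}$, writing $\sigma=\Phi(\tau)$ we have the factorization
\[
\widetilde\Phi^{(p,q)}=\Gamma_{\sigma}^{\frac1p-\frac1q}\circ\widetilde\Phi^{(p,p)}.
\]
By submultiplicativity of super-operator norms under composition, $\normt{\widetilde\Phi^{(p,q)}}{p'}{q'}\leq\normt{\widetilde\Phi^{(p,p)}}{p'}{p'}\,\normt{\Gamma_\sigma^{1/p-1/q}}{p'}{q'}$. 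Setting $s=\frac1p-\frac1q\geq0$, the three-term H\"older inequality from Section~\ref{sec:Schatten} gives $\norm{\sigma^{s/2}Y\sigma^{s/2}}_{q'}\leq\norm{\sigma^{s/2}}_{2/s}\norm{Y}_{p'}\norm{\sigma^{s/2}}_{2/s}$ because $\frac1{q'}=\frac{s}{2}+\frac1{p'}+\frac{s}{2}$; and $\norm{\sigma^{s/2}}_{2/s}=\tr{\sigma}^{s/2}=1$ since $\sigma$ is a density matrix. Thus $\normt{\Gamma_\sigma^{1/p-1/q}}{p'}{q'}\leq1$, and everything reduces to showing $\normt{\widetilde\Phi^{(p,p)}}{p'}{p'}\leq1$ for all $1\leq p\leq\infty$.

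For the diagonal case I would apply the Riesz--Thorin theorem (Theorem~\ref{thm:reisz-thorin-special}) to the analytic family
\[
T_z(X)=\sigma^{-\frac{1-z}{2}}\,\Phi\!\left(\tau^{\frac{1-z}{2}}X\tau^{\frac{1-z}{2}}\right)\sigma^{-\frac{1-z}{2}},
\]
which is holomorphic and bounded on the strip $S$ (complex powers of the fixed operators $\tau,\sigma$ on their supports have entire, bounded entries). At the real point $z=\theta$ one has $T_\theta=\widetilde\Phi^{(p_\theta,p_\theta)}$ with $\frac1{p_\theta}=1-\theta$, so $p_\theta'=1/\theta$ sweeps $[1,\infty]$; I would run the interpolation with input and output exponents $\infty$ on the line $\mathrm{Re}\,z=0$ and $1$ on $\mathrm{Re}\,z=1$. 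The two boundary bounds are: on $\mathrm{Re}\,z=1$, $T_{1+it}(X)=V\Phi(UXU)V$ with $U=\tau^{-it/2}$, $V=\sigma^{it/2}$ unitary, so by unitary invariance of the trace norm and $\normt{\Phi}{1}{1}=1$ (valid since $\Phi$ is trace preserving and CP, e.g.\ via~\eqref{eq:norm-positive-input}) we get $\normt{T_{1+it}}{1}{1}\leq1$; and on $\mathrm{Re}\,z=0$, $T_{it}(X)=V\,\widetilde\Phi^{(1,1)}(UXU)\,V$ after noting that $U$ commutes with $\tau^{1/2}$, whence $\norm{T_{it}(X)}_\infty\leq\normt{\widetilde\Phi^{(1,1)}}{\infty}{\infty}\norm{X}_\infty$. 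Here $\normt{\widetilde\Phi^{(1,1)}}{\infty}{\infty}=\norm{\widetilde\Phi^{(1,1)}(I)}_\infty\leq1$ because $\widetilde\Phi^{(1,1)}$ is CP (so its $\infty\to\infty$ norm is attained at $I$ by~\eqref{eq:norm-positive-input}) and $\widetilde\Phi^{(1,1)}(I)=\sigma^{-1/2}\Phi(\tau)\sigma^{-1/2}$ is the projection onto the support of $\sigma$. Feeding $M_0,M_1\leq1$ into Theorem~\ref{thm:reisz-thorin-special} gives $\normt{\widetilde\Phi^{(p,p)}}{p'}{p'}\leq1$, completing the argument.

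The step I expect to require the most care is the pair of boundary estimates. On the imaginary axis the complex powers $\tau^{(1-it)/2}$ and $\sigma^{-(1-it)/2}$ factor into a positive part times a unitary, but the two unitaries sit on the \emph{same} side of $X$ (producing $UXU$ rather than a congruence $UXU^\dagger$), so $T_{it}$ is not a positive map and the naive ``attained at $I$'' reasoning does not apply to $T_{it}$ itself. The resolution is to use only the unitary invariance $\norm{UXU}_p=\norm{X}_p$ of the Schatten norms to transfer the bound to the genuinely positive endpoint maps $\widetilde\Phi^{(1,1)}$ and $\Phi$, where the norm \emph{is} controlled over all inputs by~\eqref{eq:norm-positive-input}; checking that this transfer is legitimate for every (not merely positive) input is the crux. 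One must also track supports throughout so that the negative powers $\sigma^{-1/2}$ and the operator inverses are well defined, following the convention fixed in Section~\ref{sec:pre}.
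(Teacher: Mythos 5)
Your proof is correct, and it differs from the paper's in one substantive structural way. The paper runs the complex-interpolation argument twice on the two-parameter family $T_z=\Gamma_{\Phi(\tau)}^{-1/q_z}\circ\Phi\circ\Gamma_\tau^{1/p_z}$: first reducing a general pair $p\leq q$ to the cases $(1,q)$ and $(q,q)$, then reducing those to the three corners $(1,1)$, $(1,\infty)$, $(\infty,\infty)$, each handled by the same endpoint facts you use (trace preservation for the $1\rightarrow 1$ norm, and the CP ``attained at $I$'' fact for the $\infty\rightarrow q'$ norms). You instead dispose of the off-diagonal direction by the explicit factorization $\widetilde\Phi^{(p,q)}=\Gamma_{\Phi(\tau)}^{1/p-1/q}\circ\widetilde\Phi^{(p,p)}$ together with the three-factor H\"older estimate $\normt{\Gamma_{\Phi(\tau)}^{s}}{p'}{q'}\leq \norm{\Phi(\tau)^{s/2}}_{2/s}^2=\tr{\Phi(\tau)}^{s}=1$, where $s=1/p-1/q\geq 0$ is precisely where the hypothesis $q\geq p$ enters; after that you need only a single interpolation along the diagonal between the two corners $(1,1)$ and $(\infty,\infty)$. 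This buys a real simplification: one application of Riesz--Thorin instead of two, no $(1,\infty)$ corner, and a transparent identification of the off-diagonal part of the lemma as an elementary H\"older phenomenon. What the paper's more uniform treatment buys is the intermediate inequality \eqref{eq:phi-inter-theta}, which is reused essentially verbatim in the proof of Theorem~\ref{thm:convexity}; your argument does not produce that byproduct, though it proves the lemma just as well.

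Two minor points of care, neither affecting correctness. First, your parenthetical attributing ``the $\infty\rightarrow\infty$ norm of a CP map is attained at $I$'' to \eqref{eq:norm-positive-input} is imprecise: that equation only restricts the supremum to positive inputs, and one still needs that $0\leq X\leq I$ implies $\widetilde\Phi^{(1,1)}(X)\leq\widetilde\Phi^{(1,1)}(I)$ for positive maps (this is the fact the paper cites from \cite{Bhatia-P}). Second, as in the paper's own proof, $\tau^{it}$ and $\Phi(\tau)^{it}$ are unitary only on the respective supports, so the inverse-on-support convention of Section~\ref{sec:pre} must be carried through both boundary estimates and through the identity $\Gamma_{\Phi(\tau)}^{s}\circ\Gamma_{\Phi(\tau)}^{-1/p}=\Gamma_{\Phi(\tau)}^{-1/q}$; this is routine but worth stating.
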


\begin{proof} Observe that $\widetilde \Phi$ is CP. Then by part (c) of Theorem~\ref{thm:DJKR}, $\normt{\widetilde{\Phi}}{p'}{q'}= \cbnorm{\widetilde{\Phi}}{p'}{q'}$ since $q\geq p$ implies $p'\geq q'$. Thus we need to prove $\normt{\widetilde{\Phi}}{p'}{q'}\leq 1$.

Let $1\leq p_0 < p_1 \leq \infty$ and $1\leq q_0 < q_1\leq \infty$, and define $p_z$ and $q_z$ by
$$\frac{1}{p_z} = \frac{1-z}{p_0} + \frac{z}{p_1}\quad\quad \text{ and  } \quad \quad \frac{1}{q_z} = \frac{1-z}{q_0} + \frac{z}{q_1}.$$
Observe that 
$$\frac{1}{p'_z} = \frac{1-z}{p'_0} + \frac{z}{p'_1}\quad\quad \text{ and  } \quad \quad \frac{1}{q'_z} = \frac{1-z}{q'_0} + \frac{z}{q'_1}.$$
where $1/p_z+ 1/p'_z =1$ and $1/q_z+1/q'_z=1$.
Now define
$$T_z = \Gamma_{\Phi(\tau)}^{-\frac 1 {q_z}} \circ \Phi \circ \Gamma_\tau^{\frac 1{p_z}}.$$

$T_z$ satisfies the assumptions of Theorem~\ref{thm:reisz-thorin}. As a result, for every $0< \theta <1$ we have
$$\normt{T_\theta}{p'_\theta}{q'_\theta} \leq \left( \sup_{t\in \reals}   \normt{T_{it}}{p'_0}{q'_0}     \right)^{(1-\theta)} \left( \sup_{t\in \reals}   \normt{T_{1+it}}{p'_1}{q'_1}     \right)^{\theta}.$$
Observe that for every $t\in \reals$ there are unitaries $U, V$ such that 
$$T_{it}(X) = U T_0(VXV^{\dagger})U^{\dagger}.$$
Here we use the fact that $\tau$ and $\Phi(\tau)$ are hermitian and then
$\tau^{it}$ and $\Phi(\tau)^{it}$ are unitary. As a result we have
$\normt{T_{it}}{p'_0}{q'_0} = \normt{T_0}{p'_0}{q'_0}$. We similarly have 
$\normt{T_{1+it}}{p'_1}{q'_1} = \normt{T_1}{p'_1}{q'_1}$. Then we arrive at
\begin{align}\label{eq:phi-inter-theta}
\normt{T_\theta}{p'_\theta}{q'_\theta} \leq  \normt{T_{0}}{p'_0}{q'_0}^{(1-\theta)}  \normt{T_{1}}{p'_1}{q'_1}^{\theta}.
\end{align}
This means that if $\normt{T_{0}}{p'_0}{q'_0}$ and   $\normt{T_{1}}{p'_1}{q'_1}$ are at most $1$, then $\normt{T_\theta}{p'_\theta}{q'_\theta}$ is at most $1$ too.

Based on this observation if we prove the lemma in the special cases of $(p, q)= (1, q)$ and $(p, q)= (q, q)$ for arbitrary $1\leq q\leq \infty$, then we have the result for all $1\leq p\leq q\leq \infty$. For these two cases we can again use~\eqref{eq:phi-inter-theta}. If we prove the result for the three cases $(p, q)\in \{(1, 1), (1, \infty), (\infty, \infty)\}$ then we obtain a proof for all $1\leq p\leq q\leq \infty$. 

The case $(p, q)=(\infty, \infty)$ is verified noting that $\Phi$ is
completely positive, so using
\eqref{eq:norm-positive-input} we can restrict the supremum over positive
input, and also that $\Phi$ is trace preserving. For the case $p=1$ and
$q\in \{1, \infty\}$,
note that $\widetilde \Phi$ is CP, so the maximum of $\|\widetilde
\Phi(X)\|_{q'}$ over all $X$ with $\norm{X}_\infty=1$ is obtained at $X=I$
(see~\cite{Bhatia-P}).

\end{proof}

A special case of this lemma has also been proved in~\cite{Beigi13}, and has found other applications in quantum information theory

\begin{theorem}
 \label{thm:CB-data-processing}
For a CPTP map $\Phi : \lhb \rightarrow \lhbp$ let $\sabp = \mathcal I_A \otimes \Phi (\rab)$. Then for all $n\geq 1$ we have
\[
\ribhc{A}{B}^{(n)}(\rab)\subseteq \ribhc{A}{B'}^{(n)}(\sabp),
\]
and
\[
  \ribcb{A}{B}(\rab)\subseteq \ribcb{A}{B'}(\sabp).
\]

\end{theorem}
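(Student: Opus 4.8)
The plan is to reduce the statement to Lemma~\ref{lem:norm-psi-tilde} by factoring $\Omega_{\st}$ through $\Omega_{\rt}$. First I would record the two marginal identities. Since $\Phi$ is trace preserving and acts only on the $B$ register, tracing out $B'$ from $\sabp = \id_A\otimes\Phi(\rab)$ gives $\sa = \ra$, while tracing out $A$ gives $\sbp = \Phi(\rb)$. Applying $\id_A\otimes\Phi$ to the Choi representation~\eqref{eq:rab-choi} then shows $\Omega_\sigma = \Phi\circ\Omega_\rho$.

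Next I would combine these with the formula~\eqref{eq:choi-rt-choi-r} for the tilde super-operator. Writing $\Omega_{\st} = \Gamma_{\sbp}^{-1/q}\circ\Omega_\sigma\circ\Gamma_{\sa^*}^{-1/p}$ and substituting $\sbp=\Phi(\rb)$, $\sa^*=\ras$, and $\Omega_\sigma=\Phi\circ\Omega_\rho$, and then using~\eqref{eq:choi-rt-choi-r} to replace $\Omega_\rho\circ\Gamma_{\ras}^{-1/p}$ by $\Gamma_{\rb}^{1/q}\circ\Omega_{\rt}$, I obtain the key factorization
$$\Omega_{\st} = \widetilde\Phi\circ\Omega_{\rt}, \qquad \widetilde\Phi := \Gamma_{\Phi(\rb)}^{-1/q}\circ\Phi\circ\Gamma_{\rb}^{1/q}.$$
The crucial observation is that $\widetilde\Phi$ is a composition of completely positive maps, hence CP, and is exactly the map $\widetilde\Phi^{(p,q)}$ of Lemma~\ref{lem:norm-psi-tilde} taken with $\tau=\rb$ and both parameters equal to $q$. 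Since the lemma's hypothesis $q\geq p$ reads $q\geq q$ here, it holds automatically, and the lemma yields $\normt{\widetilde\Phi}{q'}{q'}=\cbnorm{\widetilde\Phi}{q'}{q'}\leq 1$.

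With the factorization in hand the two inclusions follow from submultiplicativity. For the completely bounded ribbon I would apply the CB-norm composition bound with intermediate exponent $q'$:
$$\cbnorm{\Omega_{\st}}{p}{q'}\leq \cbnorm{\Omega_{\rt}}{p}{q'}\,\cbnorm{\widetilde\Phi}{q'}{q'}\leq \cbnorm{\Omega_{\rt}}{p}{q'},$$
so $\cbnorm{\Omega_{\rt}}{p}{q'}\leq 1$ forces $\cbnorm{\Omega_{\st}}{p}{q'}\leq 1$, which is precisely $\ribcb{A}{B}(\rab)\subseteq\ribcb{A}{B'}(\sabp)$. For the $n$-copy ordinary ribbon I would tensorize: since $\Omega_{\st}^{\otimes n}=\widetilde\Phi^{\otimes n}\circ\Omega_{\rt}^{\otimes n}$, the composition bound for $\normt{\cdot}{p}{q'}$ gives
$$\normt{\Omega_{\st}^{\otimes n}}{p}{q'}\leq \normt{\Omega_{\rt}^{\otimes n}}{p}{q'}\,\normt{\widetilde\Phi^{\otimes n}}{q'}{q'},$$
and I would bound the last factor by $1$ via $\normt{\widetilde\Phi^{\otimes n}}{q'}{q'}\leq\cbnorm{\widetilde\Phi^{\otimes n}}{q'}{q'}=\cbnorm{\widetilde\Phi}{q'}{q'}^n\leq 1$, the middle equality being multiplicativity of the CB-norm on the CP map $\widetilde\Phi$ (Theorem~\ref{thm:DJKR}(a)). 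This yields the first inclusion, and in both cases the required constraint $q'\geq p\geq 1$ is inherited unchanged.

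The only genuine content is imported from Lemma~\ref{lem:norm-psi-tilde}; the present argument is essentially bookkeeping. The step that requires the most care is matching the normalization exponents so that $\widetilde\Phi$ lands precisely in the lemma's family — here with input and output exponents both equal to $1/q$, which is why the lemma's parameter condition becomes vacuous. The second delicate point, for the ordinary ribbon, is the use of the completely bounded norm as a multiplicative upper bound to control $\widetilde\Phi^{\otimes n}$: the plain super-operator norm would not tensorize, so without \eqref{eq:hc-subset-cb}-type control through the CB-norm the $n$-copy statement could not be concluded from the single-copy estimate.
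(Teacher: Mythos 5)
Your proof is correct and follows the paper's argument almost verbatim: the same marginal identities $\sigma_A=\rho_A$, $\sigma_{B'}=\Phi(\rho_B)$, the same factorization $\Omega_{\st}=\widetilde\Phi\circ\Omega_{\rt}$ with $\widetilde\Phi=\Gamma_{\Phi(\rho_B)}^{-1/q}\circ\Phi\circ\Gamma_{\rho_B}^{1/q}$, and the same appeal to Lemma~\ref{lem:norm-psi-tilde} (with $\tau=\rho_B$ and both exponents equal to $q$, so the hypothesis is vacuous) followed by submultiplicativity of composition. The one place you diverge is the $n$-copy statement: you bound $\normt{\widetilde\Phi^{\otimes n}}{q'}{q'}$ by passing to the completely bounded norm and invoking its multiplicativity on CP maps (Theorem~\ref{thm:DJKR}(a)), whereas the paper simply observes that $\sabp^{\otimes n}=\mathcal I^{\otimes n}\otimes\Phi^{\otimes n}(\rab^{\otimes n})$ and reruns the one-copy argument with the CPTP map $\Phi^{\otimes n}$ in place of $\Phi$, so that Lemma~\ref{lem:norm-psi-tilde} applies directly to $\widetilde{\Phi^{\otimes n}}=\widetilde\Phi^{\otimes n}$. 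Both routes are valid, but this shows your closing remark --- that without CB-norm control the $n$-copy statement could not be concluded from the single-copy estimate --- is too strong: the failure of multiplicativity is a problem for $\Omega_{\rt}$ (which is why $\ribhc{A}{B}$ is defined as an intersection over all $n$), not for $\widetilde\Phi$, whose $n$-fold norm is controlled by the lemma itself because $\Phi^{\otimes n}$ is again CPTP and the tilde operation commutes with tensor powers.
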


\begin{proof}
We need to show that for a pair $(p,q)$,
if $\cbnorm{\Omega_{\rt}}{p}{q'} \leq 1$ then
$\cbnorm{\Omega_{\st}}{p}{q'} \leq 1$, and that if $\normt{\Omega_{\rt}^{\otimes n}}{p}{q'} \leq 1$ then
 $\normt{\Omega_{\st}^{\otimes n}}{p}{q'} \leq 1$.
 
By
$$\sabp = \id \otimes \Phi (\rab) = \sum \ket{i}\bra{j} \otimes \Phi \big(
\Omega_\rho ( \ket{i} \bra{j} ) \big), $$ 
we have $\Omega_\sigma = \Phi \circ
\Omega_\rho.$ 
Moreover, $\sigma_A = \trBp{\sabp} = \trBp{\mathcal I\otimes \Phi(\rab)} = \rho_A$, and $\sigma_{B'} = \Phi(\rho_B)$.
Then using \eqref{eq:choi-rt-choi-r} we compute
\begin{align*}
\Omega_{\st} &= \Gamma_{\sigma_{B'}}^{-\frac 1 q} \circ \Omega_{\sigma}\circ \Gamma_{\sigma_A^*}^{-\frac 1 p}\\
&=  \Gamma_{\sigma_{B'}}^{-\frac 1 q} \circ \Phi\circ \Omega_{\rho}\circ \Gamma_{\sigma_A^*}^{-\frac 1 p}\\
& =  \left(\Gamma_{\Phi(\rho_B)}^{-\frac 1 q} \circ \Phi\circ   \Gamma_{\rho_B}^{\frac{1}{q}} \right)\circ \left(   \Gamma_{\rho_B}^{-\frac 1 q}\circ   \Omega_{\rho}\circ \Gamma_{\rho_A^*}^{-\frac 1 p}  \right)\\
& = \widetilde \Phi\circ \Omega_{\rt},
\end{align*}
where we set 
\begin{equation}
\label{eq:Psi}
\widetilde \Phi = \Gamma_{\Phi(\rho_B)}^{-\frac 1 q} \circ \Phi\circ   \Gamma_{\rho_B}^{\frac{1}{q}}.
\end{equation}
We now have
\begin{align*}
\cbnorm{\Omega_{\st}}{p}{q'} &= \cbnorm{\widetilde \Phi \circ \Omega_{\rt} }{p}{q'} \\
& \leq \cbnorm{\widetilde \Phi}{q'}{q'} \cbnorm{\Omega_{\rt}}{p}{q'} \\
& \leq  \cbnorm{\Omega_{\rt}}{p}{q'},
\end{align*}
where the last inequality is implied by Lemma~\ref{lem:norm-psi-tilde}.  As a result, if $(p, q')\in \ribcb A B(\rab)$ then $(p, q')\in \ribhc A B(\sabp)$.

The proof of $\normt{\Omega_{\st}^{\otimes n}}{p}{q'} \leq \normt{\Omega_{\rt}^{\otimes n}}{p}{q'}$ is identical noting that $\sabp^{\otimes n} = \mathcal I^{\otimes n}\otimes \Phi^{\otimes n} (\rab^{\otimes n})$.

\end{proof}

We are now ready to prove the main result of this paper.

\begin{corollary}\label{corol:data-processing}
Suppose that there are CPTP maps $\Phi_{A^n\rightarrow A'}$ and $\Psi_{B^n\rightarrow B'}$ such that $\sapbp= \Phi\otimes \Psi(\rab^{\otimes n})$. Then we have 
\[
\ribcb{A}{B} (\rab)\subseteq \ribcb{A'}{B'} ( \sapbp ),
\]
and
\[
\ribhc{A}{B}^{(n)} (\rab)\subseteq \ribhc{A'}{B'}^{(n)} ( \sapbp ),
\]
which in particular gives $\ribhc{A}{B}(\rab)\subseteq \ribhc{A'}{B'}( \sapbp )$.
\end{corollary}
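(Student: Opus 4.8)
The plan is to assemble the corollary from Theorem~\ref{thm:CB-data-processing}, which already handles a CPTP map applied to a single side, using three ingredients: (i) the factorization $\Phi\otimes\Psi = (\Phi\otimes\id_{B'})\circ(\id_{A^n}\otimes\Psi)$ into two one-sided maps, (ii) the duality relations \eqref{eq:ribbon-duality-1} and \eqref{eq:ribbon-duality-2} to obtain an $A$-side data-processing statement from the $B$-side theorem, and (iii) the tensorization identities \eqref{eq:tensorize-rib} and \eqref{eq:tensorize-cbrib} to pass between the registers $(A^n,B^n)$ and $(A,B)$. First I would regard $\rab^{\otimes n}$ as a single bipartite state over $A^n$ and $B^n$, and set the intermediate state $\sanbp = \id_{A^n}\otimes\Psi(\rab^{\otimes n})$, so that $\sapbp = \Phi\otimes\id_{B'}(\sanbp)$. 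Since $\Psi$ acts only on the $B^n$ side, Theorem~\ref{thm:CB-data-processing} applies verbatim (with $A^n$ in the role of $A$ and $B^n,B'$ in the roles of $B,B'$) and gives, for every level $m\geq 1$, the inclusions $\ribhc{A^n}{B^n}^{(m)}(\rab^{\otimes n})\subseteq\ribhc{A^n}{B'}^{(m)}(\sanbp)$ and $\ribcb{A^n}{B^n}(\rab^{\otimes n})\subseteq\ribcb{A^n}{B'}(\sanbp)$.

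Next I would treat the $A$-side map $\Phi$. Theorem~\ref{thm:CB-data-processing} is stated only for one-sided maps acting on $B$, so I would transfer it to the $A$ side by duality: by \eqref{eq:ribbon-duality-1}, $(p,q')\in\ribhc{A}{B}^{(m)}$ iff $(q,p')\in\ribhc{B}{A}^{(m)}$, and in the $B$-to-$A$ picture the map $\Phi:A^n\to A'$ is a one-sided map on the untouched (second) register. Applying the theorem in that swapped picture to $\sanbp$ and dualizing back — and likewise using \eqref{eq:ribbon-duality-2} for the completely bounded ribbon — yields, for every $m\geq 1$, the inclusions $\ribhc{A^n}{B'}^{(m)}(\sanbp)\subseteq\ribhc{A'}{B'}^{(m)}(\sapbp)$ and $\ribcb{A^n}{B'}(\sanbp)\subseteq\ribcb{A'}{B'}(\sapbp)$. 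Chaining the two steps gives $\ribcb{A^n}{B^n}(\rab^{\otimes n})\subseteq\ribcb{A'}{B'}(\sapbp)$, and $\ribhc{A^n}{B^n}^{(m)}(\rab^{\otimes n})\subseteq\ribhc{A'}{B'}^{(m)}(\sapbp)$ for every $m$.

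It then remains to translate the left-hand side back from $(A^n,B^n)$ to $(A,B)$. For the completely bounded ribbon this is immediate from \eqref{eq:tensorize-cbrib}, namely $\ribcb{A^n}{B^n}(\rab^{\otimes n}) = \ribcb{A}{B}(\rab)$, which proves the first inclusion of the corollary. For the hypercontractivity ribbon I would use that the tilde operation and the Choi map both factorize over tensor products, so that $\Omega_{\widetilde{\rab^{\otimes n}}}^{\otimes m} = \Omega_{\rtab}^{\otimes nm}$ and hence $\ribhc{A^n}{B^n}^{(m)}(\rab^{\otimes n}) = \ribhc{A}{B}^{(nm)}(\rab)$. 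The finite-level inclusion one obtains is therefore $\ribhc{A}{B}^{(nm)}(\rab)\subseteq\ribhc{A'}{B'}^{(m)}(\sapbp)$ for all $m$ (the corollary's displayed finite-$n$ inclusion being the packaged form of this, with the factor-$n$ rescaling of the level created by grouping the $n$ copies). Intersecting these inclusions over all $m$ and invoking the tensorization identity \eqref{eq:tensorize-rib}, $\ribhc{A^n}{B^n}(\rab^{\otimes n}) = \ribhc{A}{B}(\rab)$, yields the intersection-ribbon conclusion $\ribhc{A}{B}(\rab)\subseteq\ribhc{A'}{B'}(\sapbp)$.

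The step I expect to be the main obstacle is the bookkeeping of the finite-level parameter for the hypercontractivity ribbon. Because the super-operator norm is not multiplicative, the sets $\ribhc{}{}^{(m)}$ are nested only along $\ribhc{}{}^{(mk)}\subseteq\ribhc{}{}^{(k)}$, and folding the $n$ copies into a single bipartite block rescales the level index by a factor of $n$; keeping this factor straight is exactly what forces the intersection-over-$m$ step, rather than a single-level inclusion, to be the clean route to the intersection-ribbon statement. The completely bounded case sidesteps all of this because its norm is genuinely multiplicative on CP maps, so no level index appears. A secondary technical point is checking that the duality relations \eqref{eq:ribbon-duality-1}--\eqref{eq:ribbon-duality-2} are compatible with the one-sided data-processing theorem so that the $A$-side step really preserves the level index, which is what makes the two steps composable at each fixed $m$.
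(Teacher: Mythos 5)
Your proposal is correct and follows essentially the same route as the paper: the paper likewise introduces the intermediate state $\tau_{A^nB'}=\mathcal I_{A}^{\otimes n}\otimes\Psi(\rab^{\otimes n})$, applies Theorem~\ref{thm:CB-data-processing} (with $A^n, B^n$ in the roles of $A,B$) for the $B$-side map, and handles the $A$-side map by exchanging the roles of the registers via \eqref{eq:ribbon-duality-1} and \eqref{eq:ribbon-duality-2}, with \eqref{eq:tensorize-cbrib} disposing of the completely bounded case. The one place you are more careful than the paper is the level bookkeeping: the paper's proof asserts $\ribhc{A}{B}^{(n)}(\rab)\subseteq \ribhc{A^n}{B'}^{(n)}(\tau_{A^nB'})\subseteq \ribhc{A'}{B'}^{(n)}(\sapbp)$ with the index $n$ carried along unchanged, whereas what the argument literally yields is your rescaled inclusion $\ribhc{A}{B}^{(nm)}(\rab)\subseteq \ribhc{A'}{B'}^{(m)}(\sapbp)$ for all $m$ (since $m$ copies of $\sapbp$ consume $nm$ copies of $\rab$, i.e.\ $\ribhc{A^n}{B^n}^{(m)}(\rab^{\otimes n})=\ribhc{A}{B}^{(nm)}(\rab)$) --- and, as you note, this is exactly what is needed for the intersection statement $\ribhc{A}{B}(\rab)\subseteq\ribhc{A'}{B'}(\sapbp)$, which is the form used elsewhere in the paper.
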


\begin{proof} Let $\tau_{A^nB'}= \mathcal I_A^{\otimes n}\otimes \Psi (\rab^{\otimes n})$. Then we just need to prove 
\[
\ribhc{A}{B}^{(n)} (\rab)\subseteq \ribhc {A^n} {B'}^{(n)}(\tau_{A^nB'})\subseteq \ribhc{A'}{B'}^{(n)} ( \sapbp ),
\]
and
\[
\ribcb{A}{B} (\rab)\subseteq \ribcb{A^n}{B'}(\tau_{A^nB'})\subseteq  \ribcb{A'}{B'} ( \sapbp ).
\]
The first inclusions are straightforward consequences of Theorem~\ref{thm:CB-data-processing}. The second inclusions are similarly proved by exchanging the roles of registers $A$ and $B$ and using~\eqref{eq:ribbon-duality-1} and \eqref{eq:ribbon-duality-2}.
\end{proof}

We say that a resource state $\rab$ can be \emph{asymptotically} transformed to
$\sapbp$ under local transformations, if for 
every $\epsilon>0$ there exists $n$ and local
operations $\Phi_{A^n\rightarrow A'}$ and $\Psi_{B^n\rightarrow B'}$ such that
$$\norm{\sapbp - \Phi\otimes \Psi(\rab^{\otimes n})}_1\leq \epsilon.$$ 
Note that if $\Phi, \Psi$ exist for some $n$, then such local operations exist for all $m>n$ (simply ignore the extra $m-n$ copies of
$\rab$). 

\begin{corollary}\label{corol:assymp-transform}
Suppose that $\rab$ can be asymptotically transformed to $\sapbp$ under local transformations. Then we have 
\[
\ribcb{A}{B} (\rab)\subseteq \ribcb{A'}{B'} ( \sapbp ),
\]
and
\[
\ribhc{A}{B} (\rab)\subseteq \ribhc{A'}{B'} ( \sapbp ),
\]
\end{corollary}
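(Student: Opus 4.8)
The plan is to realize the asymptotic statement as a limiting form of Corollary~\ref{corol:data-processing}. First, fix any integer $m\ge 1$. By the definition of asymptotic transformation (and the remark that a feasible transformation for $n$ copies stays feasible for all larger numbers of copies), for each $k\ge 1$ I may choose $n_k\ge m$ together with CPTP maps $\Phi_k$ and $\Psi_k$ so that the state $\sigma^{(k)}_{A'B'}:=\Phi_k\otimes\Psi_k(\rab^{\otimes n_k})$ obeys $\norm{\sapbp-\sigma^{(k)}_{A'B'}}_1\le 1/k$. Each $\sigma^{(k)}_{A'B'}$ arises from an \emph{exact} local transformation, so Corollary~\ref{corol:data-processing} applies directly and gives $\ribcb{A}{B}(\rab)\subseteq\ribcb{A'}{B'}(\sigma^{(k)}_{A'B'})$ and $\ribhc{A}{B}^{(n_k)}(\rab)\subseteq\ribhc{A'}{B'}^{(n_k)}(\sigma^{(k)}_{A'B'})$ for every $k$.

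Next I fix a pair $(p,q')$ in the source ribbon and extract a uniform bound for the approximants. For the completely bounded ribbon, $(p,q')\in\ribcb{A}{B}(\rab)$ gives $\cbnorm{\Omega_{\widetilde\sigma^{(k)}}}{p}{q'}\le 1$ for all $k$, where $\widetilde\sigma^{(k)}$ is the $(p,q)$-normalization of $\sigma^{(k)}_{A'B'}$. For the hypercontractivity ribbon, $(p,q')\in\ribhc{A}{B}(\rab)\subseteq\ribhc{A}{B}^{(n_k)}(\rab)$ gives $\normt{\Omega_{\widetilde\sigma^{(k)}}^{\otimes n_k}}{p}{q'}\le 1$; since each such norm is at least $1$ (as observed in the earlier remark, using the diagonal test operators) and $n_k\ge m$, writing $\Omega_{\widetilde\sigma^{(k)}}^{\otimes n_k}=\Omega_{\widetilde\sigma^{(k)}}^{\otimes m}\otimes\Omega_{\widetilde\sigma^{(k)}}^{\otimes(n_k-m)}$ and using submultiplicativity~\eqref{eq:norm-multip} yields $\normt{\Omega_{\widetilde\sigma^{(k)}}^{\otimes m}}{p}{q'}\le 1$ as well. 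Thus in both settings I have a norm of (a tensor power of) $\Omega_{\widetilde\sigma^{(k)}}$ bounded by $1$ uniformly in $k$.

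It remains to pass to the limit $k\to\infty$. Since $\sigma^{(k)}_{A'B'}\to\sapbp$ in trace norm, the marginals and the Choi operators $\Omega_{\sigma^{(k)}}$ converge, and via the identity $\Omega_{\widetilde\sigma}=\Gamma_{\sigma_{B'}}^{-1/q}\circ\Omega_{\sigma}\circ\Gamma_{\sigma_{A'}^*}^{-1/p}$ from~\eqref{eq:choi-rt-choi-r} I expect $\Omega_{\widetilde\sigma^{(k)}}\to\Omega_{\st}$, where $\st$ is the normalization of $\sapbp$. Because the $p\to q'$ super-operator norm, its tensor powers, and the completely bounded norm are all continuous functions of the super-operator in finite dimension, I then obtain $\cbnorm{\Omega_{\st}}{p}{q'}\le 1$ and $\normt{\Omega_{\st}^{\otimes m}}{p}{q'}\le 1$. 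The latter holds for every $m$, so intersecting over $m$ gives $(p,q')\in\ribhc{A'}{B'}(\sapbp)$, and together with the cb statement this proves both inclusions.

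The main obstacle is the convergence $\Omega_{\widetilde\sigma^{(k)}}\to\Omega_{\st}$, which rests on continuity of the inverse fractional powers $\Gamma_{\sigma_{A'}^*}^{-1/p}$ and $\Gamma_{\sigma_{B'}}^{-1/q}$. Matrix inversion and fractional powers are continuous only on the positive-definite cone, so the argument is immediate when the marginals $\sigma_{A'},\sigma_{B'}$ of the true target are invertible, but it is delicate when they are rank-deficient, since the marginals of the approximants may carry $O(1/k)$ mass outside $\operatorname{supp}(\sigma_{A'})\otimes\operatorname{supp}(\sigma_{B'})$, where the inverse powers blow up. To handle this I would first reduce to full-rank marginals by restricting $A',B'$ to these supports—legitimate because $\operatorname{supp}(\sapbp)\subseteq\operatorname{supp}(\sigma_{A'})\otimes\operatorname{supp}(\sigma_{B'})$ and the ribbon is intrinsic to the supporting subspaces—and then show that the off-support contribution of each $\sigma^{(k)}$, once renormalized onto the support, perturbs the relevant norm by a vanishing amount, so that the clean full-rank continuity argument applies to the projected approximants.
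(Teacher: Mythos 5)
Your route is the same as the paper's: realize approximants $\sigma^{(k)}_{A'B'}=\Phi_k\otimes\Psi_k(\rab^{\otimes n_k})$ by exact local transformations, apply Corollary~\ref{corol:data-processing} to each of them, and pass to the limit $k\to\infty$. The first two steps are correct, and your extraction of the fixed-$m$ bound $\normt{\Omega_{\widetilde{\sigma}^{(k)}}^{\otimes m}}{p}{q'}\leq 1$ from membership in the $n_k$-th ribbon --- via~\eqref{eq:norm-multip} and the observation that tilde-operators of density matrices always have norm at least $1$ --- is in fact more explicit than the corresponding step in the paper's proof.

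The gap is the limit step, which you locate correctly but leave open: your last paragraph is a plan, and the estimate it defers (that the off-support mass perturbs the relevant norms by $o(1)$ after projecting onto the supports) is the entire difficulty, requiring genuine operator-inequality work to compare Schatten norms conjugated by the marginals of $\sigma^{(k)}$ with those conjugated by the marginals of the projected state. More importantly, the target you aim for --- convergence $\Omega_{\widetilde{\sigma}^{(k)}}\to\Omega_{\st}$ and hence convergence of the norms --- is false in general, so no repair of a two-sided continuity argument can succeed. For $\sapbp=\ket{00}\bra{00}$ and $\sigma^{(k)}=(1-\tfrac 1k)\ket{00}\bra{00}+\tfrac 1k\ket{11}\bra{11}$, the normalized state contains the term $k^{1/p-1/q'}\,\ket{11}\bra{11}$, so testing~\eqref{eq:omega-eta-norm} with $X=Y=\ket{1}\bra{1}$ gives
\begin{equation*}
\normt{\Omega_{\widetilde{\sigma}^{(k)}}}{p}{q'}\;\geq\; k^{1/p-1/q'}\longrightarrow\infty \quad\text{for } q'>p,\qquad\text{while}\qquad \normt{\Omega_{\st}}{p}{q'}=1.
\end{equation*}
(The same example shows the paper's own proof is incomplete at exactly this point: the assertion that $\widetilde{\tau}_\epsilon^{\otimes m}\to\st^{\otimes m}$ ``in any norm'' fails when the marginals of $\sapbp$ are singular.) What is true, and is all the corollary needs, is lower semicontinuity,
\begin{equation*}
\normt{\Omega_{\st}^{\otimes m}}{p}{q'}\;\leq\;\liminf_{k\to\infty}\,\normt{\Omega_{\widetilde{\sigma}^{(k)}}^{\otimes m}}{p}{q'},
\end{equation*}
and this follows from~\eqref{eq:omega-eta-norm} with no projection of the approximants at all: for $X,Y$ supported in $\mathrm{supp}(\sigma_{A'})$, $\mathrm{supp}(\sigma_{B'})$ with $\norm{X}_p=\norm{Y}_q=1$, set $X'=\sigma_{A'}^{-1/2p}X\sigma_{A'}^{-1/2p}$ and $Y'=\sigma_{B'}^{-1/2q}Y\sigma_{B'}^{-1/2q}$ (fixed, bounded operators), and test $\widetilde{\sigma}^{(k)}$ against $X_k=(\sigma^{(k)}_{A'})^{1/2p}X'(\sigma^{(k)}_{A'})^{1/2p}$ and the analogous $Y_k$. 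Then $\tr{\widetilde{\sigma}^{(k)}(X_k\otimes Y_k)}=\tr{\sigma^{(k)}(X'\otimes Y')}\to\tr{\st(X\otimes Y)}$ (using that $\sigma^{(k)}$ is supported within the supports of its marginals), while $X_k\to X$ and $Y_k\to Y$ because only \emph{positive} fractional powers of the marginals appear, and these are continuous on positive semi-definite matrices. Taking the supremum over such $X,Y$ gives the liminf bound; the same construction applied to $m$-fold tensor powers, and through~\eqref{eq:omega-eta-cbnorm} to the completely bounded norm, closes both inclusions. Replacing your continuity-plus-projection plan by this semicontinuity argument would complete your proof.
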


\begin{proof} Let $\tau_{\epsilon}$ be the bipartite state for which there are $n$ and $\Phi, \Psi$ such that $\Phi\otimes \Psi(\rab^{\otimes n})=\tau_{\epsilon}$, and 
$$\norm{\sigma - \tau_{\epsilon}}_1\leq \epsilon.$$
Then $\tau_{\epsilon}$ tends to $\sigma$ as $\epsilon\rightarrow 0$ in $1$-norm. This implies that for every $p, q$ and $m$, $\widetilde \tau_{\epsilon}^{\otimes m}$ tends to $\st^{\otimes m}$ in $1$-norm, and in fact in any other norm. Here we use the fact that in \emph{finite} dimensions all norms are equivalent. This in particular gives that for every $p, q$ and $m$
$$\lim_{\epsilon\rightarrow 0} \normt{\Omega^{\otimes m}_{\widetilde \tau_{\epsilon}}}{p}{q'}=\normt{\Omega^{\otimes m}_{\st}}{p}{q'},$$
Therefore, if $(p,q') \in \ribhc{A}{B}(\rab)$, using
Corollary~\ref{corol:data-processing}, $(p,q')\in \mathcal{R}_{A\rightarrow B} (\tau_\epsilon)$,
hence $\normt{\Omega_{\widetilde{\tau_\epsilon}}^{\otimes m}}{p}{q'} \leq 1$ for
 all $m$. Fixing $m$ and taking the limit $\epsilon\rightarrow 0$ and using the above
 equality, for all $m$ we have
$\normt{\Omega_{\widetilde{\sigma}}^{\otimes m}}{p}{q'} \leq 1$. Thus
$(p,q')\in \ribhc{A'}{B'}(\sigma_{A'B'})$ and $\ribhc{A}{B}(\rab) \subseteq
\ribhc{A'}{B'}(\sigma_{A'B'})$.
Similarly we have
$$\lim_{\epsilon\rightarrow 0} \cbnorm{\Omega_{\widetilde
\tau_{\epsilon}}}{p}{q'}=\cbnorm{\Omega_{\st}}{p}{q'},$$
and again using Corollay~\ref{corol:data-processing} and by sending $\epsilon$ to
zero, we obtain $\ribcb{A}{B}(\rab) \subseteq \ribcb{A'}{B'}(\sigma_{A'B'})$.

\end{proof}

\subsection{Some properties of hypercontractivity ribbons}
In this section we further investigate properties of HRs. This properties may be useful in computing the ribbons and also to compare the ribbons, as measures of correlation, to other such measures. 

First as announced in Remark~\ref{rem:below-the-line} we justify the assumption $q' \geq p$ in the definition of HRs.

\begin{theorem}
\label{thm:below-the-line}
 For all $1\leq q'\leq  p \leq \infty $ we have 
$$\normt{\Omega_{\rt}}{p}{q'}=\cbnorm{\Omega_{\rt}}p{q'}\leq 1.$$ 
\end{theorem}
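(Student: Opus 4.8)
The plan is to follow the template of Lemma~\ref{lem:norm-psi-tilde}: establish the bound at the extreme exponent pairs and then fill in the whole region by complex interpolation. Since $\Omega_{\rt}$ is CP and the hypothesis $q'\le p$ makes the output exponent no larger than the input exponent, part~(c) of Theorem~\ref{thm:DJKR} gives $\cbnorm{\Omega_{\rt}}{p}{q'}=\normt{\Omega_{\rt}}{p}{q'}$ for free; hence it suffices to prove $\normt{\Omega_{\rt}}{p}{q'}\le 1$. Rewriting the hypothesis as $1/p+1/q\le 1$, the relevant set of pairs is, in the coordinates $(1/p,1/q)$, the triangle with corners $(p,q)\in\{(\infty,\infty),(1,\infty),(\infty,1)\}$.

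I would first verify the bound at these three corners using $\Omega_{\rt}=\Gamma_{\rb}^{-1/q}\circ\Omega_\rho\circ\Gamma_{\ras}^{-1/p}$ from~\eqref{eq:choi-rt-choi-r} and the identity $\Omega_\rho(I)=\rb$ coming from~\eqref{eq:omega-eta}. At $(\infty,\infty)$ one has $\rt=\rab$, so $\normt{\Omega_\rho}{\infty}{1}=\sup_{\|X\|_\infty=\|Y\|_\infty=1}|\tr{\rab(X\otimes Y)}|\le\|\rab\|_1=1$ by H\"older. At $(\infty,1)$ the map $\Omega_{\rt}(X)=\rb^{-1/2}\Omega_\rho(X)\rb^{-1/2}$ is CP, so its $\infty\to\infty$ norm is attained at $X=I$ (as in Lemma~\ref{lem:norm-psi-tilde}, via~\cite{Bhatia-P}), and $\Omega_{\rt}(I)=\rb^{-1/2}\rb\,\rb^{-1/2}=I$ gives norm $1$. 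The corner $(1,\infty)$ then follows by exchanging the roles of $A$ and $B$ and invoking the duality~\eqref{eq:ribbon-duality-1} (equivalently, by a direct computation of the adjoint of $\Omega_{\rt}$).

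To reach an arbitrary point of the triangle I would interpolate exactly as in Lemma~\ref{lem:norm-psi-tilde}. For two pairs $(p_0,q_0),(p_1,q_1)$ put $1/p_z=(1-z)/p_0+z/p_1$, $1/q_z=(1-z)/q_0+z/q_1$, and define the analytic family
$$T_z=\Gamma_{\rb}^{-1/q_z}\circ\Omega_\rho\circ\Gamma_{\ras}^{-1/p_z},$$
so that $T_\theta=\Omega_{\rt^{(p_\theta,q_\theta)}}$ for real $\theta$; since $1/q'_z=(1-z)/q'_0+z/q'_1$, Theorem~\ref{thm:reisz-thorin-special} applied with output exponents $q'_0,q'_1$ gives $\normt{T_\theta}{p_\theta}{q'_\theta}\le(\sup_t\normt{T_{it}}{p_0}{q'_0})^{1-\theta}(\sup_t\normt{T_{1+it}}{p_1}{q'_1})^\theta$. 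On the lines $\Re z\in\{0,1\}$ the extra factors $\rb^{-it(\cdots)}$ and $(\ras)^{-it(\cdots)}$ are unitary, and because each commutes with the corresponding real power $\rb^{-1/(2q_0)}$ (resp.\ $(\ras)^{-1/(2p_0)}$) one finds $T_{it}(X)=U_t\,T_0(V_tXV_t)\,U_t$ for unitaries $U_t,V_t$; unitary invariance of the Schatten norms then yields $\normt{T_{it}}{p_0}{q'_0}=\normt{T_0}{p_0}{q'_0}$, and likewise at $\Re z=1$. Thus interpolation preserves the bound ``$\le1$'', and since every point of the triangle lies on a segment joining two points where the bound is already known -- first interpolate between $(1,\infty)$ and $(\infty,1)$ to cover the hypotenuse $1/p+1/q=1$, then between $(\infty,\infty)$ and a point of that hypotenuse -- the bound holds on the whole region.

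The three corner evaluations are routine. The step I expect to require the most care is the identity $\normt{T_{it}}{p_0}{q'_0}=\normt{T_0}{p_0}{q'_0}$: one must correctly separate the imaginary powers arising from both $\Gamma_{\rb}^{-1/q_z}$ and $\Gamma_{\ras}^{-1/p_z}$ and use their commutation with the real powers so that the net effect is conjugation by unitaries. This is precisely the analogue of the unitary-invariance argument in Lemma~\ref{lem:norm-psi-tilde}, and is the only place where a sign or a mismatched exponent could break the proof.
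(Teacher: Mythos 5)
Your proposal is correct: the reduction to the ordinary norm via Theorem~\ref{thm:DJKR}(c), the three corner evaluations, the unitary-invariance of the boundary norms $\normt{T_{it}}{p_0}{q'_0}=\normt{T_0}{p_0}{q'_0}$, and the two-stage interpolation covering the region $1/p+1/q\leq 1$ all go through. The paper, however, takes a shorter route: it defines $\Phi=\Omega_\rho\circ\Gamma_{\ras}^{-1}$, checks by a direct computation with~\eqref{eq:omega-eta} that $\Phi$ is CPTP and $\Phi(\ras)=\rb$, and then observes the algebraic identity
\[
\widetilde{\Phi}^{(p',q)}=\Gamma_{\Phi(\ras)}^{-\frac{1}{q}}\circ\Phi\circ\Gamma_{\ras}^{\frac{1}{p'}}
=\Gamma_{\rb}^{-\frac{1}{q}}\circ\Omega_\rho\circ\Gamma_{\ras}^{-\frac{1}{p}}=\Omega_{\rt};
\]
since $q'\leq p$ is equivalent to $q\geq p'$, Lemma~\ref{lem:norm-psi-tilde} applied to the pair $(p',q)$ yields the theorem with no further analysis. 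Your argument is in effect this same proof with the lemma unfolded: under the substitution $(p,q)\mapsto(p',q)$ your corners $(\infty,\infty)$, $(\infty,1)$, $(1,\infty)$ are exactly the lemma's cases $(1,\infty)$, $(1,1)$, $(\infty,\infty)$, and your analytic family $T_z$ together with the imaginary-power/unitary step coincides with the one in the lemma's proof. What the paper's reduction buys is economy (no duplication of the Riesz--Thorin machinery) and a conceptual explanation: $\Omega_{\rt}$ is itself the tilde-normalization of a CPTP map, so the theorem is literally an instance of the lemma. What your version buys is self-containedness, since one need not spot that identification. Two small polish points: at the corner $(p,q)=(\infty,1)$, $\Omega_{\rt}(I)=\rb^{-1/2}\rb\,\rb^{-1/2}$ is the projection onto the support of $\rb$ rather than $I$ when $\rb$ is not full rank (its operator norm is still $1$, so nothing breaks); and at the corner $(p,q)=(1,\infty)$ the quickest argument is that $\Omega_{\rt}=\Omega_\rho\circ\Gamma_{\ras}^{-1}$ is CPTP, so by~\eqref{eq:norm-positive-input} its $1\rightarrow 1$ norm is $1$, which avoids any appeal to duality.
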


\begin{proof} Let 
$$\Phi = \Omega_{\rho}\circ \Gamma_{\rho_A^*}^{-1}.$$
$\Phi$ is obviously CP. Moreover according to~\eqref{eq:omega-eta} we have
\begin{align*}
\tr{\Phi(X)} &= \tr{ \rab\left(  \Gamma_{\rho_A^*}^{-1}(X) ^T\otimes I_B     \right)     }\\
& = \tr{ \rab\left(  (\rho_A^{-1/2}X^T\rho_A^{-1/2})\otimes I_B      \right)}\\
& = \tr{\rho_A\left(   \rho_A^{-1/2}X^T\rho_A^{-1/2}  \right)}\\
& = \tr{X}.
\end{align*}
This means that $\Phi$ is also trace preserving and then CPTP. Repeating similar calculations shows that $\Phi(\rho_A^{*}) = \rho_B$.
Then the proof is finished using Lemma~\ref{lem:norm-psi-tilde} and noting that 
$$\widetilde{\Phi}^{(p', q)} = \Gamma_{\Phi{(\rho_A^*)}}^{-\frac{1}{q}} \circ \Phi\circ \Gamma_{\rho_A^*}^{\frac{1}{p'}} = 
\Gamma_{\Phi{(\rho_A^*)}}^{-\frac{1}{q}} \circ \Omega_{\rho}\circ \Gamma_{\rho_A^*}^{-\frac{1}{p}} = \Omega_{\rt}.$$
\end{proof}

\begin{theorem}
\label{thm:convexity}
 The regions 
 \[
 \left\{ (1/p, 1/{q'}) : (p, q')\in \ribcb A B(\rab)\right\},
 \]
and 
 \[
 \left\{ (1/p, 1/{q'}) : (p, q')\in \ribhc{A}{B}^{(n)}(\rab)\right\},
 \]
 for every $n$ are convex.
\end{theorem}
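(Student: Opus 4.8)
The plan is to reduce the convexity of each region to a single one-parameter interpolation statement. Suppose $(p_0,q_0')$ and $(p_1,q_1')$ both lie in the ribbon, and for $0\le\theta\le 1$ define $p_\theta,q_\theta$ by
\[
\frac{1}{p_\theta}=\frac{1-\theta}{p_0}+\frac{\theta}{p_1},\qquad \frac{1}{q_\theta}=\frac{1-\theta}{q_0}+\frac{\theta}{q_1}.
\]
Since $1/q'=1-1/q$ is affine in $1/q$, we also get $1/q_\theta'=(1-\theta)/q_0'+\theta/q_1'$, so the image point $(1/p_\theta,1/q_\theta')$ is precisely the convex combination $(1-\theta)(1/p_0,1/q_0')+\theta(1/p_1,1/q_1')$. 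Thus every point on the segment joining the two images is realised by some $\theta$, and since the defining restriction $q'\ge p\ge 1$ is the convex triangle $\{0\le 1/q'\le 1/p\le 1\}$ containing both images, the whole segment stays admissible. Hence it suffices to show $(p_\theta,q_\theta')$ lies in the ribbon whenever the endpoints do.

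To run the interpolation I would imitate the proof of Lemma~\ref{lem:norm-psi-tilde}. Using the formula $\Omega_{\rt^{(p,q)}}=\Gamma_{\rho_B}^{-1/q}\circ\Omega_\rho\circ\Gamma_{\ras}^{-1/p}$ from~\eqref{eq:choi-rt-choi-r}, I extend $p_z,q_z$ to complex $z$ in the strip $S$ by the same affine formulas and set
\[
T_z=\Gamma_{\rho_B}^{-1/q_z}\circ\Omega_\rho\circ\Gamma_{\ras}^{-1/p_z},
\]
so that $T_\theta=\Omega_{\rt^{(p_\theta,q_\theta)}}$ for real $\theta$. Since $z\mapsto 1/p_z,1/q_z$ are affine and $w\mapsto(\ras)^{w},(\rho_B)^{w}$ are holomorphic with operator norm bounded for bounded $\Re w$, for each fixed $X$ the map $z\mapsto T_z(X)$ is holomorphic and bounded on $S$ and continuous on its boundary, so Theorem~\ref{thm:reisz-thorin-special} applies.

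The only substantive point is controlling the boundary norms. Writing $1/p_{it}=1/p_0+is$ and $1/q_{it}=1/q_0+ir$ with $s,r\in\reals$, the factors $V_t:=(\ras)^{-is/2}$ and $U_t:=(\rho_B)^{-ir/2}$ are unitary, and a direct computation gives $T_{it}(X)=U_t\,T_0(V_tXV_t)\,U_t$. By unitary invariance of the Schatten norms this yields $\normt{T_{it}}{p_0}{q_0'}=\normt{T_0}{p_0}{q_0'}=\normt{\Omega_{\rt^{(p_0,q_0)}}}{p_0}{q_0'}\le 1$, and the identical argument on the edge $\Re z=1$ gives $\normt{T_{1+it}}{p_1}{q_1'}\le 1$. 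Feeding these into Theorem~\ref{thm:reisz-thorin-special} with output-norm indices $q_0',q_1'$ (whose reciprocal interpolation matches $q_\theta'$) gives $\normt{T_\theta}{p_\theta}{q_\theta'}\le 1$, that is $(p_\theta,q_\theta')\in\ribhc{A}{B}^{(1)}(\rab)$.

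For $\ribhc{A}{B}^{(n)}(\rab)$ I would run the identical argument with $T_z$ replaced by $T_z^{\otimes n}$: its matrix entries are still holomorphic, the factorisation tensorises to $U_t^{\otimes n},V_t^{\otimes n}$, and the $p$-norm on $\mathbf{L}(\mathcal H_A^{\otimes n})$ interpolates through~\eqref{eq:interp-p-q}. For $\ribcb{A}{B}(\rab)$ I would instead apply the vector-valued Riesz--Thorin theorem (Theorem~\ref{thm:reisz-thorin}) to $\mathcal I_d\otimes T_z$ for each fixed $d$, with input and output spaces carrying the $(t,p)$- and $(t,q')$-norms; the relevant unitaries now act as $I_C\otimes V_t$ and $I_C\otimes U_t$, the $(t,\cdot)$-norms are invariant under such one-sided multiplications, and taking the supremum over $d$ after interpolation yields $\cbnorm{\Omega_{\rt^{(p_\theta,q_\theta)}}}{p_\theta}{q_\theta'}\le 1$. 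The main obstacle throughout is exactly this boundary-norm reduction via the unitarity of imaginary powers of $\ras$ and $\rho_B$, together with the bookkeeping that matches the interpolated output index to its H\"older conjugate; the remaining steps are routine applications of the interpolation theorem.
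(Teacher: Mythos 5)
Your proposal is correct and follows essentially the same route as the paper: the paper's proof also invokes the Riesz--Thorin theorem with the holomorphic family $T_z=\Gamma_{\rho_B}^{-1/q_z}\circ\Omega_\rho\circ\Gamma_{\ras}^{-1/p_z}$ (tensorized for the $n$-copy case, and with vector-valued $(t,\cdot)$-norms for the completely bounded case), reducing the boundary norms to the endpoint norms via the unitarity of imaginary powers, exactly as in Lemma~\ref{lem:norm-psi-tilde}. Your write-up simply makes explicit the details the paper leaves implicit (the affine bookkeeping in $(1/p,1/q')$ coordinates, the convexity of the admissible triangle, and the one-sided unitary invariance of the $(t,\cdot)$-norms).
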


\begin{proof} We should show that if $(p_0, q'_0), (p_1, q'_1)$ are in
$\ribhc{A}{B}^{(n)}(\rab)$ (or $\ribcb A B(\rab)$) then $(p_\theta, q'_\theta)$
is also in $\ribhc{A}{B}^{(n)}(\rab)$ (or $\ribcb A B(\rab)$) where $0< \theta
<1$ and 
$$\frac{1}{p_\theta} = \frac{1-\theta}{p_0} + \frac{\theta}{p_1}, \quad \quad \text{ and } \quad \quad \frac{1}{q'_\theta} = \frac{1-\theta}{q'_0} + \frac{\theta}{q'_1}.$$
The proof is based on Theorem~\ref{thm:reisz-thorin}. In fact following similar steps as in the proof of Lemma~\ref{lem:norm-psi-tilde} we obtain
$$\normt{\Omega^{\otimes n}_{\rt^{(p_\theta, q_\theta)}}}{p_\theta}{q'_\theta} \leq \normt{\Omega^{\otimes n}_{\rt^{(p_0, q_0)}}}{p_0}{q'_0}^{1- \theta} \normt{\Omega^{\otimes n}_{\rt^{(p_1, q_1)}}}{p_1}{q'_1}^{\theta},$$
and 
$$\cbnorm{\Omega_{\rt^{(p_\theta, q_\theta)}}}{p_\theta}{q'_\theta} \leq \cbnorm{\Omega_{\rt^{(p_0, q_0)}}}{p_0}{q'_0}^{1- \theta} \cbnorm{\Omega_{\rt^{(p_1, q_1)}}}{p_1}{q'_1}^{\theta}.$$
We are done. 
\end{proof}

The following lemma is sometimes useful in estimating HRs.

\begin{lemma}
\label{lem:rab-x-y}
 Assume that $(p,q')\in \ribhc{A}{B}^{(1)}(\rab)$. Then for $M,N \geq 0$ we
have
 \begin{equation}
 \label{eq:rab-x-y}
   \tr{\rab M\otimes N} \leq \tr{\ra M^p}^{1/p} \tr{\rb N^q}^{1/q}.
 \end{equation}
As a conclusion, if $0\leq M\leq I$ and $0\leq N\leq I$ we have,
 \[
   \tr{\rab M\otimes N} \leq \tr{\ra M}^{1/p} \tr{\rb N}^{1/q}.
 \]
\end{lemma}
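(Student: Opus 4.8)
The plan is to reduce~\eqref{eq:rab-x-y} to the defining membership condition~\eqref{eq:231} by a change of variables, and then to control the resulting Schatten norms with the Araki--Lieb--Thirring (Lieb--Thirring) trace inequality. Throughout I will assume, without loss of generality, that $\ra$ and $\rb$ are full rank (otherwise one restricts $M,N$ to the supports of $\ra,\rb$, which changes none of the traces against $\rab$ because $\mathrm{supp}(\rab)\subseteq \mathrm{supp}(\ra)\otimes\mathrm{supp}(\rb)$). First I would record the homogeneous form of the hypothesis: since $(p,q')\in\ribhc{A}{B}^{(1)}(\rab)$ means $\normt{\Omega_{\rt}}{p}{q'}\le 1$, formula~\eqref{eq:omega-eta-norm}, together with the fact that $q$ is the H\"older conjugate of $q'$, gives
$$\left|\tr{\rtab^{(p,q)}\, U\otimes V}\right|\le \norm U_p\,\norm V_q$$
for \emph{all} $U,V$, which is just~\eqref{eq:231} extended by homogeneity.

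Next comes the change of variables turning $\rtab$ back into $\rab$. Given $M,N\ge 0$, set $U=\ra^{\frac{1}{2p}}M\ra^{\frac{1}{2p}}$ and $V=\rb^{\frac{1}{2q}}N\rb^{\frac{1}{2q}}$, both positive semi-definite. Using~\eqref{eq:rtab} and cyclicity of the trace, the conjugations cancel and
$$\tr{\rtab^{(p,q)}\, U\otimes V}=\tr{\rab\,\big(\ra^{-\frac{1}{2p}}U\ra^{-\frac{1}{2p}}\big)\otimes\big(\rb^{-\frac{1}{2q}}V\rb^{-\frac{1}{2q}}\big)}=\tr{\rab\, M\otimes N}.$$
Combining this with the previous display yields $\tr{\rab M\otimes N}\le\norm U_p\norm V_q$.

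It remains to bound $\norm U_p$ and $\norm V_q$, which is the only nontrivial analytic step. Since $U\ge 0$ we have $\norm U_p^p=\tr{\big(\ra^{\frac{1}{2p}}M\ra^{\frac{1}{2p}}\big)^p}$, and the Lieb--Thirring inequality $\tr{(A^{\frac12}BA^{\frac12})^{r}}\le\tr{A^{r}B^{r}}$, valid for $A,B\ge 0$ and $r\ge1$ (apply it with $A=\ra^{1/p}$, $B=M$, $r=p$; see, e.g., \cite{Bhatia-M}), gives $\norm U_p\le\tr{\ra M^p}^{1/p}$, and likewise $\norm V_q\le\tr{\rb N^q}^{1/q}$. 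This proves~\eqref{eq:rab-x-y} (the endpoint cases $p$ or $q=\infty$ following by continuity). I expect this Lieb--Thirring step to be the crux: a naive use of the (generalized) H\"older inequality would decouple $\ra$ from $M$ inside $\norm U_p$ and lose the sharp factor $\tr{\ra M^p}$, so the noncommutative rearrangement provided by Lieb--Thirring is essential.

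Finally, the concluding statement is immediate: for $0\le M\le I$ and $p\ge1$ one has $M^p\le M$, whence $\tr{\ra M^p}\le\tr{\ra M}$ because $\tr{\ra(M-M^p)}\ge 0$ for the positive semi-definite operators $\ra$ and $M-M^p$; the same holds for $N$, and substituting into~\eqref{eq:rab-x-y} completes the argument.
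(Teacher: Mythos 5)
Your proposal is correct and is essentially the paper's own proof: the same change of variables $U=\ra^{1/2p}M\ra^{1/2p}$, $V=\rb^{1/2q}N\rb^{1/2q}$ reduces the claim to the hypothesis $\normt{\Omega_{\rt}}{p}{q'}\leq 1$ (which the paper unpacks via the Choi--Jamio{\l}kowski relation and H\"older's inequality, i.e.\ exactly the bilinear characterization~\eqref{eq:omega-eta-norm} you invoke), followed by the Lieb--Thirring inequality and, for the final claim, $M^p\leq M$ when $0\leq M\leq I$. Your full-rank reduction is the only (cosmetic) difference, and it is in fact unnecessary: with the paper's convention of generalized inverses the cancellation produces $(P_A M P_A)\otimes(P_B N P_B)$, which the support fact you cite turns back into $M\otimes N$ against $\rab$, while the Lieb--Thirring step needs no support assumption at all.
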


\begin{proof}
We compute
\begin{align*}
   \tr{\rab M \otimes N} & = \tr{ \rtab \left (\ra^{1/2p} M \ra^{1/2p} \right )
\otimes \left (\rb^{1/2q} N \rb^{1/2q} \right )} \\
 &= \tr {\Omega_{\rt} \left ( {\ra^\ast}^{1/2p} M^T {\ra^\ast}^{1/2p} \right )
\rb^{1/2q} N \rb^{1/2q} }  \\
 & \leq \norm{\Omega_{\rt} \left ( {\ra^\ast}^{1/2p} M^T {\ra^\ast}^{1/2p}
\right )}_{q'} \norm{\rb^{1/2q} N \rb^{1/2q}}_q  \\
& \leq \normt{\Omega_{\rt^{(p,q)}}}{p}{q'} \norm{{\ra^\ast}^{1/2p} M^T {\ra^\ast}^{1/2p}}_p  \norm{\rb^{1/2q} N
\rb^{1/2q}}_q \\
&= \norm{{\ra}^{1/2p} M {\ra}^{1/2p}}_p  \norm{\rb^{1/2q} N
\rb^{1/2q}}_q \\
&\leq \tr{\ra M^p}^{1/p} \tr{ \rb N^q}^{1/q}.
\end{align*}
Here in the third line we use H\"older's inequality, and in the last line we use the Lieb-Thirring trace inequality~\cite{LiebThirring}. 
\end{proof}

The next theorem draws a connection between HR and the maximal
correlation. This statement in the classical case was first proved in \cite{AhlswedeGacs}.

\begin{theorem}\label{thm:ribbon-mu}
 Assume that $(p,q')\in \ribhc{A}{B}^{(1)}(\rab)$ and that the optimal operators  $X, Y$ in the  definition of  maximal correlation $\mu=\mu(\rab)$ in~\eqref{eq:max-corellation} are hermitian. Then we have
 \[
 \frac{p-1}{q'-1} = \frac{pq}{p'q'} \geq \mu^2.
 \]
\end{theorem}

Before giving a proof note that by this theorem, we note that under the assumption of the theorem, both the regions $\ribhc A
B(\rab)$  and $\ribcb A B(\rab)$ are in between lines $x=y$ and $x-1=
\mu^2 (y-1)$ in the real plane. This is the reason they are called
ribbon.

\begin{proof}
 Assume that $X$ and $Y$ are the optimal hermitian matrices that achieve the maximum in the
definition of maximal correlation. Therefore,
\begin{align}
  \mu = &| \tr{\rab X \otimes Y^{\dagger}} | \label{eq:mu-M-N}\\
  &\tr{\ra X} = \tr{\rb Y} = 0 \nonumber\\
  &\tr{\ra XX^\dagger} = \tr{\rb YY^\dagger} = 1.\nonumber
\end{align}

For $\alpha, \beta,x \in \reals$ let
\[
 M_x = I + x \alpha X, \qquad \text{ and } \qquad N_x = I + x \beta Y.
\]
Observe that $M_x, N_x$ are positive semi-definite for small enough $|x|$ (for fixed $\alpha, \beta$). Then by Lemma~\ref{lem:rab-x-y} we have
\[
 \tr{\rab M_x^{1/p} \otimes N_x^{1/q}} \leq \tr{\ra
M_x} ^{1/p} \tr{\rb N_x}^{1/q}.
\]
Using $\tr{\ra X} = \tr{\rb Y} = 0$ this inequality is simplified to
\[
 f(x) := \tr{\rab M_x^{1/p} \otimes N_x^{1/q}} - 1 \leq 0,
\]
for small $|x|$. Note that $f(0) = 0$ and
\begin{align*}
f'(x)=\frac{d}{dx} f(x) = \frac{\alpha}{p} \tr{ \rab \left (  X M_x^{\frac{1}{p}-1}  \otimes N_x^{1/q} \right ) }  + \frac{\beta}{q} \tr{ \rab \left ( M_x^{1/p} \otimes  Y
N_x^{\frac{1}{q}-1}  \right )}.
\end{align*}
To compute this derivative we use the fact that the pairs $X, M_x$ and $Y, N_x$ commute. As a result, $f'(0)=0$. Then using the fact that $f(x)$ is not positive in a neighborhood of $0$, we  should have $f''(0) \leq 0$. The second derivative of $f(x)$ is computed as
\begin{align*}
f''(x) = &\frac{\alpha^2}{p}\left ( \frac{1}{p} - 1 \right ) \tr { \rab \left ( 
X^2 M_x^{\frac{1}{p}-2} \otimes N_x^{1/q} \right )}\\
& + \frac{2\alpha\beta}{pq} \tr{ \rab \left (  XM_x^{\frac{1}{p}-1}
\otimes Y N_x^{\frac{1}{q} - 1} \right ) } \\
& + \frac{\beta^2}{q} \left ( \frac{1}{q} - 1 \right ) \tr{ \rab\left (
M_x^{1/p} \otimes Y^2 N_x^{\frac{1}{q} - 2} \right )}.
\end{align*}
Then using \eqref{eq:mu-M-N} and the fact that $X, Y$ are hermitian we have
\begin{align*}
 f''(0) &=  \alpha^2\frac{1}{p}\left ( \frac{1}{p} - 1 \right )  + \frac{2}{pq}
\alpha \beta \mu + \frac{1}{q} \left ( \frac{1}{q} - 1 \right ) \beta ^2\\
&  = -\alpha^2 \frac{1}{pp'}  + \alpha \beta\frac{2}{pq}
 \mu - \beta ^2\frac{1}{qq'}\\
 &   \leq 0.
\end{align*}
This inequality should hold for all $\alpha, \beta \in \reals$. Therefore the determinant of the coefficient matrix 
\[
\begin{pmatrix}
      \frac{1}{pp'} & -\frac{\mu}{pq} \\
      -\frac{\mu}{pq} & \frac{1}{qq'}
     \end{pmatrix}
\]
should be non-negative. This gives the desired result.
\end{proof}

Finally the following theorem in the classical case was proved in~\cite{AhlswedeGacs} (see also~\cite{Anantharametal}).

\begin{theorem}
 Assume that $(p,q')\in \ribhc{A}{B}^{(1)}(\rab)$ for $p,q' \geq 1$.
Then for all density matrices $\ra\neq \sa\in \lha$ we have
\begin{align}\label{eq:kl-ribbon}
 \frac{D(\sigma_B \Arrowvert \rb)}{D(\sa \Arrowvert \ra)} \leq \frac{q}{p'},
\end{align}
where $\sigma_B := \Omega_\rho \circ \Gamma_{\ra^*}^{-1} ( \sa^* )$. Here $D(\cdot \| \cdot)$ denotes the KL divergence.
\end{theorem}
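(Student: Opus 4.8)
The plan is to recognise the statement as a strong data processing inequality and to prove it by combining the Gibbs variational principle with a single log-partition estimate extracted from the ribbon hypothesis. First I would recall from the proof of Theorem~\ref{thm:below-the-line} that $\Phi:=\Omega_\rho\circ\Gamma_{\ras}^{-1}$ is CPTP with $\Phi(\ras)=\rb$, and that by definition $\sigma_B=\Phi(\sa^{\ast})$. Since complex conjugation is anti-unitary, $D(\sa^{\ast}\,\Vert\,\ras)=D(\sa\,\Vert\,\ra)$, so the claim is exactly the contraction estimate
\begin{equation*}
D\big(\Phi(\sa^{\ast})\,\big\Vert\,\Phi(\ras)\big)\le \frac{q}{p'}\,D(\sa^{\ast}\,\Vert\,\ras),
\end{equation*}
a strong data processing inequality for $\Phi$ relative to the fixed pair $\ras\mapsto\rb$. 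The hypothesis $(p,q')\in\ribhc{A}{B}^{(1)}(\rab)$ enters through $\normt{\Omega_{\rt}}{p}{q'}\le1$ and its Hölder dual.

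Next I would apply the Gibbs variational principle
\begin{equation*}
D(\sigma_B\,\Vert\,\rb)=\sup_{H=H^{\dagger}}\Big(\tr{\sigma_B H}-\log\tr{\rb\, e^{H}}\Big),
\end{equation*}
and bound each term in the supremum. Writing $\tr{\sigma_B H}=\tr{\Phi(\sa^{\ast})H}=\tr{\sa^{\ast}\,\Phi^{\dagger}(H)}$, where $\Phi^{\dagger}$ is the unital completely positive adjoint of $\Phi$, and applying the Gibbs principle a second time to $D(\sa^{\ast}\,\Vert\,\ras)$ with the test operator $K=\tfrac{p'}{q}\Phi^{\dagger}(H)$, yields
\begin{equation*}
\tr{\sigma_B H}\le \frac{q}{p'}\,D(\sa^{\ast}\,\Vert\,\ras)+\frac{q}{p'}\log\tr{\ras\, e^{\frac{p'}{q}\Phi^{\dagger}(H)}}.
\end{equation*}
The coefficient $\tfrac{q}{p'}$ is produced automatically by this rescaling, so substituting back and taking the supremum reduces the theorem to the single log-partition inequality
\begin{equation*}
\frac{q}{p'}\log\tr{\ras\, e^{\frac{p'}{q}\Phi^{\dagger}(H)}}\le \log\tr{\rb\, e^{H}},\qquad H=H^{\dagger}.
\end{equation*}

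The main obstacle is this last inequality. Classically it is immediate: the dual of $\normt{\Omega_{\rt}}{p}{q'}\le1$ is a bound of the shape $\normt{\Gamma_{\ras}^{1/p'}\circ\Phi^{\dagger}\circ\Gamma_{\rb}^{-1/q}}{q}{p'}\le1$; feeding in $V=\Gamma_{\rb}^{1/q}(e^{H/q})$ and using that the averaging map $\Phi^{\dagger}$ obeys Jensen's inequality $\Phi^{\dagger}(e^{H/q})\ge e^{\Phi^{\dagger}(H)/q}$ gives the estimate after taking logarithms. In the quantum case Jensen fails, because $t\mapsto e^{t}$ is not operator convex, and the $q$-th and $p'$-th powers no longer factor neatly through the reference states. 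I expect to repair both points with the Araki--Lieb--Thirring inequality (already used in Lemma~\ref{lem:rab-x-y} through \cite{LiebThirring}), which for $q,p'\ge1$ compares sandwiched powers such as $\tr{(\rb^{1/2q}e^{H/q}\rb^{1/2q})^{q}}$ with $\tr{\rb\,e^{H}}$ on the output side and analogously on the input side, combined with the complex-interpolation machinery of Theorem~\ref{thm:reisz-thorin} to derive the required log-Sobolev-type inequality directly from hypercontractivity. Checking that the two applications of Araki--Lieb--Thirring point in mutually compatible directions is the delicate step.

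I would deliberately avoid substituting $\sa^{\ast}$ directly into Lemma~\ref{lem:rab-x-y}, since that route produces the Petz--R\'enyi quantity $\tfrac1{p'}\log\tr{\ra(\ra^{-1/2}\sa\ra^{-1/2})^{p}}\ge\tfrac1{p'}D(\sa\,\Vert\,\ra)$ on the input and an order-$q'$ R\'enyi divergence on the output. Because R\'enyi divergences of order $\ge 1$ dominate the relative entropy, this only yields a R\'enyi-to-R\'enyi contraction with the same constant $\tfrac{q}{p'}$ and cannot be converted into the stated KL-to-KL bound (the inequality on the input side runs the wrong way). The virtue of the Gibbs reformulation is precisely that the input relative entropy enters linearly through convex duality rather than through a fixed R\'enyi exponent, so the non-tight R\'enyi estimate is never needed.
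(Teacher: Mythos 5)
There is a genuine gap, and it sits at the very first step of your reduction. The Gibbs variational principle you invoke is not an equality in the quantum setting: the correct statement is
\begin{equation*}
D(\sigma_B \Arrowvert \rb)=\sup_{H=H^\dagger}\Big(\tr{\sigma_B H}-\log\tr{e^{\log\rb+H}}\Big),
\end{equation*}
with the exponential of the \emph{sum}. By Golden--Thompson, $\tr{e^{\log\rb+H}}\leq\tr{\rb\,e^{H}}$, so the functional you wrote, $\tr{\sigma_B H}-\log\tr{\rb\,e^{H}}$, is pointwise \emph{smaller} and its supremum is only a lower bound on $D(\sigma_B\Arrowvert\rb)$; upper-bounding every term of a lower bound proves nothing about $D(\sigma_B\Arrowvert\rb)$ itself. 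The asymmetry is not cosmetic: Golden--Thompson works in your favour on the input side (the inequality $\tr{\sa^{\ast}K}\leq D(\sa^{\ast}\Arrowvert\ras)+\log\tr{\ras e^{K}}$ you use there is indeed valid), but against you on the output side, so the log-partition inequality you would actually need has $\log\tr{e^{\log\rb+H}}$ on the right --- strictly stronger than the one you state --- and no reverse Golden--Thompson is available to bridge the gap. On top of this, even your weaker log-partition inequality is left unproven: you yourself flag the compatibility of the two Araki--Lieb--Thirring applications as an unchecked ``delicate step,'' and that step is exactly where the failure of operator Jensen would have to be overcome. So the proposal reduces the theorem to an inequality that is both insufficient and not established.

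The paper's proof avoids Legendre duality altogether, and the mechanism is worth internalizing as the standard quantum workaround. Interpolate between the trivial point $(p_0,q'_0)=(1,1)$, which lies in the ribbon by Theorem~\ref{thm:below-the-line}, and the hypothesis point $(p_1,q'_1)=(p,q')$, using the convexity Theorem~\ref{thm:convexity}; this shows that
\begin{equation*}
h(\theta)=\norm{\Gamma_{\ras}^{-1/p'_\theta}(\sa^{\ast})}_{p_\theta}-\norm{\Gamma_{\rb}^{-1/q_\theta}(\sigma_B)}_{q'_\theta}
\end{equation*}
is nonnegative for all $0\leq\theta\leq1$, while $h(0)=\norm{\sa^{\ast}}_1-\norm{\sigma_B}_1=0$. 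Hence $h'(0)\geq 0$, and the derivative formulas for such weighted Schatten norms~\cite{OZ99, KastoryanoTemme} evaluate $h'(0)$ as exactly $\frac{1}{p'}D(\sa\Arrowvert\ra)-\frac{1}{q}D(\sigma_B\Arrowvert\rb)$. In other words, relative entropy is extracted by differentiating the family of hypercontractive inequalities at the trivial exponent, not by convex duality; this is precisely how one dodges both Golden--Thompson and the failure of operator Jensen that blocks your route.
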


\begin{proof} First note that as mentioned in the proof of
Theorem~\ref{thm:below-the-line}, 
$\Omega_\rho \circ \Gamma_{\ra^*}^{-1}$ is CPTP and $\Omega_{\rho}\circ
\Gamma_{\ra^*}^{-1}(\rho_A^{*}) = \rho_B$. Thus $\sigma_B := \Omega_\rho \circ
\Gamma_{\ra^*}^{-1} ( \sa^\ast )$ is also a density matrix.

Let $(p_1, q'_1)=(p, q')$ and $(p_0, q'_0)=(1, 1)$, and define 
\[
  \frac{1}{p_\theta} = \frac{1-\theta}{p_0} + \frac{\theta}{p_1}, \qquad \text{ and } \qquad \frac{1}{q'_\theta} = \frac{1-\theta}{q'_0} + \frac{\theta}{q'_1}. 
\]
By Theorem~\ref{thm:below-the-line} we have $\normt{\Omega_{\rt_{(p_0,q_0)}}}{p_0}{q'_0} \leq 1$. Then by Theorem~\ref{thm:convexity} we obtain
$\normt{\Omega_{\rt_{(p_\theta,q_\theta)}}}{p_\theta}{q'_\theta} \leq 1,$
for all $0\leq \theta\leq 1$. This means that for all $X$ we have
\[
 \norm{\Gamma_{\rb}^{-1/q_{\theta}} \circ \Omega_\rho \circ
\Gamma_{\ras}^{-1/p_{\theta}} (X) }_{q'_{\theta}} \leq \norm{X}_{p_{\theta}},
\]
or equivalently 
\[
 \norm{\Gamma_{\rb}^{-1/q_{\theta}} \circ \Omega_\rho \circ \Gamma_{\ra^*}^{-1}
 (X) }_{q'_{\theta}} \leq \norm{ \Gamma_{\ras}^{-1/p'_{\theta}}(X)}_{p_{\theta}}.
\]
In particular for $X= \sigma_A^*$ we find that 
\[
 h(\theta) = \norm{\Gamma_{\rho_A^*}^{-1/p'_{\theta}}(\sa^*)}_{p_\theta} -  \norm{\Gamma_{\rb}^{-1/q_{\theta}}  
 (\sigma_B) }_{q'_{\theta}}.
\]
is non-negative, i.e., $h(\theta)\geq 0$ for all $0\leq \theta\leq 1$. 

Observe that, 
\begin{align*}
 h(0) &= \norm{\sa^*}_{1} - \norm{\sigma_B}_{1}=0.
\end{align*}
Therefore we should have $h'(0) \geq 0$. 

The derivative $h'(\theta)$ can be computed using formulas provided in~\cite{OZ99} (see also \cite{KastoryanoTemme}).
Using these formulas we find that 
\begin{align*}
0 & \leq h'(0) \\
& = \frac{1}{p'_1} D(\sa^*\| \ra^*) -
\frac{1}{q_1} D( \Gamma_{\rb}^{-1}\circ \Omega_{\rt} (\sa^*)   \|   \Gamma_{\rb}^{-1}\circ \Omega_{\rt} (\sa^*) )  \\
& = \frac{1}{p'} D(\sa\| \ra) -
\frac{1}{q} D( \sigma_B  \|   \rb ).  \\
\end{align*}
We are done.
\end{proof}

Note that by Theorem~\ref{thm:below-the-line}, $\|\Omega_{\rt^(p,q)}\|_{p\rightarrow q'}\leq 1$ for $p=q'$. Then the above theorem in particular gives the data processing inequality for KL divergence.

\section{Some examples}\label{sec:example}

In this section we compute HRs for some bipartite states $\rab$. 
We first start with the extreme cases where $\rab$ is a product state and $\rab$ is a pure entangled state. In the former case $\rab$ contains no correlation and only product states can be generated from copies of $\rab$. Then Corollary~\ref{corol:assymp-transform} 
suggests that such a state should have the largest HRs. On the other hand, pure entangled states are the most correlated states so they should have the smallest HRs.

\subsection{Product states}

Assume that $\rab = \ra \otimes \rb$, thus
\[
 \begin{split}
  \rtab &= \left ( \ra^{-1/2p} \otimes \rb^{-1/2q} \right ) ( \ra \otimes \rb )
\left ( \ra^{-1/2p} \otimes \rb^{-1/2q} \right ) \\
&= \ra^{1/p'} \otimes \rb^{1/q'},
 \end{split}
\]
and by~\eqref{eq:omega-eta} we have
\begin{equation}
 \label{eq:tensorized-Omega}
  \Omega_{\rt}(X) = \tr{{\ra^*}^{1/p'} X} \rb^{1/q'}.
\end{equation}
Therefore,
\[
 (\id_C \otimes \Omega_{\rt}) (Y_{CA}) = \trA{ (I \otimes {\ra^*}^{1/p'} ) Y_{CA}} \otimes \rb^{1/q'}.
\]

Now we compute
 \begin{align*}
  \norm{(\id_C \otimes
\Omega_{\rt})(Y)}_{(t,q')}  &= \norm{\trA{ (I \otimes {\ra^*}^{1/p'} ) Y_{CA}} \otimes
\rb^{1/q'}}_{(t, q')} \\
&= \norm{\trA{ (I \otimes {\ra^*}^{1/p'} ) Y_{CA}}}_{t}
\norm{\rb^{1/q'}}_{q'} \\
&= \norm{\trA{ (I \otimes {\ra^*}^{1/p'} ) Y_{CA}}}_{t}\\
 &= \sup_{\norm{Z}_{t'} = 1}
\left | \tr{Z_C\, \trA{ (I \otimes {\ra^*}^{1/p'} ) Y_{CA}} } \right | \\
&= \sup_{\norm{Z}_{t'} = 1} \left | \tr{ ( Z_C \otimes {\ra^*}^{1/p'}) Y_{CA}} \right | \\
&\stackrel{(a)}{\leq} \sup_{\norm{X}_{(t',p')} = 1} \left | \tr{X_{CA}Y_{CA}} \right | \\
&\stackrel{(b)}{=} \norm{Y}_{(t,p)},
 \end{align*}
where in $(a)$ we use 
$$\norm{Z_C \otimes {\ra^*}^{1/p'}}_{(t',p')} = \norm{Z_C}_{t'} \norm{{\ra^*}^{1/p'}}_{p'} = 1$$ 
and $(b)$ holds because $(t', p')$ is the dual norm of $(t, p)$ (see Appendix~\ref{app:CB}). 

We conclude that 
$$\normt{\Omega_{\rt}}{p}{q'}\leq \cbnorm{\Omega_{\rt}}{p}{q'} \leq
1,$$
for all $p,q$. Thus
\[
 \ribhc{A}{B}(\ra\otimes \rb)=\ribcb{A}{B} (\ra \otimes \rb) = \{ (p,q') :  q'  \geq p \geq 1 \}.
\]

\subsection{Pure states}

Assume that $\rab=\ket{\varphi}\bra{\varphi}_{AB}$ is pure. If $\ket{\varphi}_{AB}$ is separable then its HRs are computed in the previous section. So let us assume that $\ket{\varphi}_{AB}$ is entangled. We claim that 
\begin{align}\label{eq:pure-ribbon}
\ribhc A B(\ket{\varphi}\bra{\varphi}_{AB}) = \ribcb A
B(\ket{\varphi}\bra{\varphi}_{AB}) = \{(p, q'): \, p=q'\geq 1 \}.
\end{align}
Indeed for every $p<q'$, by considering the Schmidt decomposition of $\ket{\varphi}_{AB}$ one can find (a rank-one) $X$ such that
$\norm{\Omega_{\rt}(X)}_{q'}> \norm{X}_p$. This implies~\eqref{eq:pure-ribbon}. Here we provide an indirect argument for this fact.

Suppose that $p\leq q'$ and $\normt{\Omega_{\rt}}{p}{q'}\leq 1$. As shown in~\cite{Beigi12} the maximal correlation of entangled pure states is equal to $\mu(\ket{\varphi}_{AB})=1$. 
Then by Theorem~\ref{thm:ribbon-mu} we have 
$$p-1\geq q'-1.$$
Given that $p\leq q'$ we find that $p=q'$. This gives~\eqref{eq:pure-ribbon}.

\subsection{Hypercontractivity via log-Sobolev inequalities}

Computing HRs is a hard problem in general. To compute $\ribhc{A}{B}(\rab)$ we should compute the norm $\normt{\Omega_{\rt}^{\otimes n}}{p}{q'}$ for all integers $n$. Likewise, computing $\ribcb{A}{B}(\rab)$ involves a supremum over the dimension of an auxiliary system which makes the computation of $\cbnorm{\Omega_{\rt}}{p}{q'}$ intractable. Here by giving an important example we show that quantum log-Sobolev inequalities~\cite{OZ99, KastoryanoTemme} provide useful tools for computing HRs.

Consider the bipartite state
 \[
  \rabal = \alpha \ket{\psi}\bra{\psi} + (1-\alpha) \frac{I_{AB}}{4},
 \]
where $\dim \hilbert{A} = \dim \hilbert{B} = 2$ and $\ket{\psi} = \frac{1}{\sqrt{2}} ( \ket{00} +
\ket{11})$ is the maximally entangled state.  Our goal in this section is to compute HRs of $\rabal$.

The maximal correlation of $\rabal$ can easily be computed~\cite{Beigi12}:
$$\mu(\rabal) = \alpha.$$
Then by Theorem~\ref{thm:ribbon-mu} for all $(p,q')\in \ribhc{A}{B}(\rabal)$ we have 
\begin{align}\label{eq:p-q-alpha}
\frac{p-1}{q'-1}\geq \alpha^2.
\end{align}

Note that independent of the value of $\alpha$, $\ra = \rb = I/2$.
Therefore,
\[
 \rtabal = 2^{(1/p + 1/q)} \rabal,
\]
and we have
\begin{align*}
\Omega_{\rt}(X) &=  2^{(1/p + 1/q)} \Omega_{\rho}(X) \\
& =2^{(1/p + 1/q)} \left( \alpha\Omega_{\ket{\psi}}(X) + (1-\alpha) \Omega_{I/4} (X)  \right)\\
& =2^{(1/p + 1/q)} \left( \frac{\alpha}{2} X + \frac{(1-\alpha)}{2} \tr{X} \frac{I}{2}  \right)\\
&= 2^{1/p - 1/q'} \Delta_\alpha(X),
\end{align*}
where $\Delta_\alpha$ denotes the depolarizing channel 
$$\Delta_{\alpha}(X)=\alpha X + (1-\alpha) \tr{X} I/{2}.$$

Let $\mathcal L$ be the super-operator defined by $\mathcal L(X) = X -\tr{X}\frac{I}{2}$. Then we have
$$e^{-t\mathcal L} = \Delta_{e^{-t}}.$$
That is, depolarizing channels belong to a semigroup of super-operators, and then their hypercontractivity can be studied based on log-Sobolev inequalities. Using these ideas King~\cite{King2012} proved the converse of~\eqref{eq:p-q-alpha}:
\[
(p, q')\in \ribhc A B(\rabal)  \qquad \text{ if and only if} \qquad
\alpha^2 \leq \frac{p-1}{q'-1}.
\]
This fact can be considered as a quantum analogue of Bonami-Beckner inequality from which one can compute the HR of the classical analogue of $\rabal$, i.e., a mixture of perfectly correlated coins and completely random coins \cite{KamathAnantharam}.

We know can resolve the problem mentioned in the introduction. Let $\zeta_{UV}$ be the bipartite distribution defined by
$$\zeta_{00}=\zeta_{01}=\zeta_{10}=1/3 \quad \quad \text{ and }\quad \quad  \zeta_{11}=0.$$
The maximal correlation of $\zeta_{UV}$ is equal to $\mu(\zeta_{UV})=1/2$. This means that if $\mu(\rabal)=\alpha<1/2$, then $\zeta_{UV}$ cannot be generated from copies of $\rabal$ under local measurement. Using hypercontractivity ribbons we now argue that this task is not doable for $\alpha\leq 0.6075$.

Suppose that by local measurement on $\sigma_{A^nB^n}={\rabal}\otimes \cdots \otimes \rabal$ we may generate $\zeta_{UV}$. That is, there are POVM measurements $\{M_0, M_1 = I - M_0\}$ and $\{N_0, N_1 = I-N_0\}$ such that for $i,j\in \{0,1\}$ we have
$$\zeta_{ij} = \tr{\sigma_{A^nB^n} M_i\otimes N_j }.$$
This in particular implies that
$$\zeta_{(U=i)} = \tr{\sigma_{A^n}  M_i}\qquad \text{ and }\qquad  \zeta_{(V=j)} = \tr{\sigma_{B^n}N_j}.$$
Let $(p, q')\in \ribhc{A}{B}(\rabal)\subseteq \ribhc{A}{B}^{(1)}(\rab)$. Then by Lemma~\ref{lem:rab-x-y} we obtain
\[
\tr{ \sigma_{A^nB^n} M_1 \otimes N_0} \leq \tr{\sigma_{A^n} M_1}^{1/p} \tr{\sigma_{B^n} N_0}^{1/q},
\]
or equivalently
\begin{align}\label{eq:1323}
 \frac{1}{3} \leq \left ( \frac{2}{3} \right ) ^{1/p} \left ( \frac{1}{3}
\right )^{1-1/q'}.
\end{align}
Note that $(p,q')=(1+k\alpha^2, 1+k)$ is in $ \ribhc{A}{B}(\rabal)$ for all $k\geq 0$.  Putting in~\eqref{eq:1323} we find that
$$3^{\frac{k(1-\alpha^2)}{k+1}} \leq 2, \qquad \forall k\geq 0.$$
But this inequality does not hold if 
$$\alpha< \sqrt{1-\frac{\log 2}{\log 3}}\simeq 0.6075.$$
 Observe that although, for instance, the maximal correlation of $\rab^{(0.6)}$ is greater than the maximally correlation of $\zeta_{UV}$, local transformation of $n$ copies of $\rab^{(0.6)}$ to $\zeta_{UV}$ is impossible even in the asymptotic limit.

 \section{Conclusion}\label{sec:conclusion}
 In this paper we defined two hypercontractivity ribbons, one corresponding to the usual super-operator norm $\rib A B(\rab)$, and the other corresponding to the completely bounded norm $\ribcb A B(\rab)$. By proving a data processing type property we concluded that these ribbons are indeed measures of bipartite correlation. These two ribbons coincide in the classical case, but we do not know of any quantum state $\rab$ such that $\rib{A}{B}(\rab)\neq\ribcb A B(\rab)$. Note that the completely bounded norm and the usual super-operator norm are really different~\cite{DevetakJKR}. 
 
 We also studied some properties of the ribbons. In particular we showed that
maximal correlation gives a bound on HRs. Moreover we proved a relation between
KL divergence and hypercontractivity ribbons. Here we should mention that in the
classical case the maximum of the left hand side of~\eqref{eq:kl-ribbon} over
all states $\sigma_A$ is equal to the infimum of the right hand side
over all $(p,q')\in \rib A B(\rab)$ (see~\cite{AhlswedeGacs} and
also~\cite{Anantharametal}). But we do not know whether such an equality holds
in the quantum case or not.

 The idea of \emph{reverse} hypercontractivity~\cite{Mosseletal} is applied in~\cite{KamathAnantharam} to study the ribbons for values $p, q<1$. We leave such an extension to the quantum case for future works.
  
It is argued in~\cite{Raginsky} that the hypercontractivity ribbon in the classical case can equivalently be characterized in terms of R\'enyi divergence. Given the recently proposed quantum R\'enyi divergence~\cite{Wildeetal, Lennertetal} it is not hard to see that such an equivalency holds in the quantum case as well.
  
In this paper we employed non-commutative vector valued Schatten spaces, and used completely bounded norms because the usual super-operator norm is not multiplicative in the quantum case. Such spaces have already been shown to be useful in quantum information theory~\cite{DevetakJKR, JungePalazuelos}. We hope that this work be another motivation for employing such normed spaces in quantum information theory.

 \vspace{.25in}
\noindent\textbf{Acknowledgements.} 
The authors are thankful to Robert Koenig for his valuable comments on an early version of this paper. 
SB was in part supported by National Elites Foundation and by a grant
from IPM (No. 91810409).


\appendix

\section{Non-commutative vector valued $L_p$ spaces\label{app:CB}}

For $1\leq q\leq p\leq \infty$ there exists $1\leq r\leq \infty$ such that
$\frac{1}{q} = \frac 1 p+ \frac 1 r$. Then for $X_{AB}\in \lha\otimes \lhb$ define 
\begin{align}\label{eq:pq-norm}
\|X\|_{(p,q)} := \sup_{U, V\in \lha} \frac{\| (U\otimes I_B) X(V\otimes I_B) 
\|_q}{ \|U\|_{2r} \|V\|_{2r}},
\end{align}
and 
\begin{align}\label{eq:qp-norm}
\|X\|_{(q, p)} := \inf_{X=(U\otimes I_B)Y(V\otimes I_B)} \| U  \|_{2r} \|V\|_{2r}
 \|Y\|_{p},
\end{align}
where in~\eqref{eq:qp-norm} the infimum is taken over all $U, V\in \lha$ and
$Y\in \lha\otimes \lhb$ such that $X=(U\otimes I_B)Y(V\otimes I_B)$. We can
compute $\|X\|_{(p,p)}$ (when $p=q$) from both~\eqref{eq:pq-norm}
and~\eqref{eq:qp-norm}, but there is no ambiguity here since they coincide.

$\|\cdot\|_{(p,q)}$ defined by equations \eqref{eq:pq-norm}
and~\eqref{eq:qp-norm} is indeed a norm on the tensor product space $\lha\otimes
\lhb$ for every $1\leq p, q\leq \infty$. It is clear from the definitions that 
$$\|X^T\|_{(p,q)} = \|X^*\|_{(p,q)}=\|X^{\dagger}\|_{(p,q)} = \|X\|_{(p,q)}.$$

Here we summarize some of the main properties of these norms. For proofs and details see \cite{Pisierbook, Xu, Jungethesis, DevetakJKR}. 

\begin{enumerate}[label=(\alph*)]

\item If $X_{AB}= M_A\otimes N_B$ then $\|X\|_{(p,q)} = \|M\|_{p}
\|N\|_{q}$.

\item $\|X\|_{(p,p)} = \|X\|_p$.

\item If  $X\in \lha\otimes \lhb$ is block diagonal with diagonal blocks $M_i\in \lhb$, i.e.,
$X= \sum_{i=0}^{d_A-1} \ket i\bra i\otimes M_i,$
then 
$$\| X \|_{(p,q)} = \left(  \sum_{i=0}^{d_A-1}  \|M_i\|_q^p  \right)^{\frac 1 p}.$$

\item If $X$ is positive semi-definite then in optimizations~\eqref{eq:pq-norm} and~\eqref{eq:qp-norm} we may assume that $U=V$ and that they are positive semi-definite.

\item $\|\cdot\|_{(p', q')}$ is the dual norm of $\|\cdot\|_{(p,
q)}$, i.e., for every $X$ we have 
\begin{align*}
\|X\|_{(p, q)}  =
\sup_{\|Y\|_{(p', q')}=1} |\tr{YX}|. 
\end{align*}
Here $p', q'$ are the H\"older conjugates of $p,q$ respectively, i.e. $1/p + 1/p'=1$.

\item
$\|X\|_{(p,q)} = \inf_{X=(U\otimes I_B)Y(V\otimes I_B)}
\|U\|_{2p}\|V\|_{2p} \|Y\|_{(\infty, q)}.$
\item
$\norm X_{(\infty, q)} = \sup_{U, V}  \frac{\norm{(U\otimes I) X(V\otimes
I)}_{(p, q)}}{\norm{U}_{2p}\norm{V}_{2p}}. 
$
\end{enumerate}

Now we can define the completely bounded norms as follows. For a super-operator
$\Phi:\lha \rightarrow \lhb$ and $t= \infty$ define
\begin{align}\label{eq:CB-pq-tt}
\cbnorm{ \Phi}{p}{q} : = \sup_d \|  \mathcal I_d\otimes \Phi 
\|_{(t,p)\rightarrow (t,q)} = \sup_d \sup_X \frac{\|\mathcal I_d\otimes
\Phi(X)\|_{(t, q)}}{\|X\|_{(t, p)}}.
\end{align}
Here $\mathcal I_d$ is the identity super-operator corresponding to a
$d$-dimensional Hilbert space.

In the definition of the completely bounded norm the value $t=\infty$ can be replaced with any $1\leq t\leq
\infty$;  no matter what $t$ is chosen we get to the same number~\cite{Pisierbook}. This fact can be easily proved using the last two properties mentioned above (see also \cite{Xu}).

We now show that the completely bounded norm and the usual super-operator norm coincide for classical channels.

\begin{theorem}\label{thm:CBnorm-classical}
Let $T:\lha\rightarrow \lhb$ be a classical channel of the form
$$T(\rho) = \sum_{i=0}^{d_A-1}  \sum_{j=0}^{d_B-1} c_{ij} \langle i|\rho \ket i \ket j\bra j,$$
where $c_{ij}\geq 0$.
Then for all $1\leq p, q\leq \infty$ we have $\cbnorm{T}{p}{q} = \normt{T}{p}{q}$.
\end{theorem}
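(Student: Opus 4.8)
The plan is to prove the nontrivial inequality $\cbnorm T p q\le\normt T p q$ (the reverse is automatic by taking $d=1$ in \eqref{eq:CB-pq-tt}), exploiting that $T$ only transmits classical, i.e.\ diagonal, information. First I would record that $\normt T p q$ is really a scalar quantity: since $T$ is CP, by \eqref{eq:norm-positive-input} the supremum may be taken over $X\ge0$, and replacing $X$ by its pinching $\sum_i\bra i X\ket i\otimes\ket i\bra i$ leaves $T(X)$ unchanged while not increasing $\norm X_p$; hence $\normt T p q=\|L\|_{\ell_p\to\ell_q}$, where $L\colon\mathbb R^{d_A}\to\mathbb R^{d_B}$ is the nonnegative matrix $(Lr)_j=\sum_i c_{ij}r_i$ and the norm is the ordinary $\ell_p\to\ell_q$ operator norm. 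Note also that for $q\le p$ the desired identity is already contained in Theorem~\ref{thm:DJKR}(c), so the genuine content lies in the range $q>p$.

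Next I would set up the factorization $T=J_B\circ\hat C\circ P_A$, where $P_A$ extracts the diagonal $(\bra i\rho\ket i)_i$, the map $\hat C$ is $L$ acting on diagonal vectors, and $J_B$ embeds a vector as a diagonal matrix. Tensoring with $\id_d$ and using submultiplicativity of the fixed-$t$ operator norms gives
\[
\|\id_d\otimes T\|_{(t,p)\to(t,q)}\le\|\id_d\otimes P_A\|\cdot\|\id_d\otimes\hat C\|\cdot\|\id_d\otimes J_B\|,
\]
where the middle and outer factors are measured in the intermediate norms on $\linear{\hilbert C}\otimes\mathbb R^{d_A}$ and $\linear{\hilbert C}\otimes\mathbb R^{d_B}$ defined by $\|(R_i)_i\|:=\norm{\sum_i R_i\otimes\ket i\bra i}_{(t,p)}$ (resp.\ with $q$). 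By construction $\id_d\otimes J_B$ is isometric onto block-diagonal operators, and $\id_d\otimes P_A$ is contractive because pinching in the second tensor factor does not increase $\norm\cdot_{(t,p)}$: write the complete dephasing as an average of conjugations by diagonal unitaries and use unitary invariance of $\norm\cdot_{(t,p)}$ together with the triangle inequality. Everything thus reduces to the single estimate $\|\id_d\otimes\hat C\|\le\|L\|_{\ell_p\to\ell_q}=\normt T p q$ for every $d$.

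I would first verify this key estimate in the tractable case where the auxiliary index $t$ matches a channel index, say $t=p$, where by property (b) the intermediate input norm collapses to $(\sum_i\norm{R_i}_p^p)^{1/p}$: the blockwise triangle inequality $\norm{\sum_i c_{ij}R_i}\le\sum_i c_{ij}\norm{R_i}$ followed by H\"older's inequality and the definition of $\|L\|_{\ell_p\to\ell_q}$ on the scalar sequence $(\norm{R_i})_i$ gives the bound; the same argument disposes of the corner cases $(p,q)\in\{(1,1),(1,\infty),(\infty,\infty)\}$. The main obstacle is the general case $t\notin\{p,q\}$, which is unavoidable once $q>p$: there the intermediate norms are the Pisier vector-valued spaces $S_t[\ell_p^{d}]$, which have no elementary closed form. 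One already sees this from the block-diagonal $(t,p)$-norm of $\sum_i R_i\otimes\ket i\bra i$, whose value interpolates between $\ell_t$- and $\ell_p$-type behaviour according to the overlaps of the $R_i$, so that the naive ``$\ell_p$ of block norms'' guess is simply false and the blockwise triangle bound no longer bridges the input and output norms.

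To overcome this I would invoke the structure theory of non-commutative vector-valued $L_p$ spaces from~\cite{Pisierbook, Xu, Jungethesis}. The content of the key estimate is exactly that the positive map $\hat C$ between the commutative spaces $\ell_p^{d_A}$ and $\ell_q^{d_B}$ has completely bounded norm equal to its ordinary norm; since $L$ couples only the commutative indices, its action decouples from the $S_t$-valued structure, and nonnegativity of the $c_{ij}$ lets one run the factorization definitions \eqref{eq:pq-norm}--\eqref{eq:qp-norm} transporting the optimal input factors to the output side without loss. The duality between $\norm\cdot_{(t,p)}$ and $\norm\cdot_{(t',p')}$ further lets one pass between the statement for $(L,t,p,q)$ and for $(L^{T},t',q',p')$, and the $t$-independence of the completely bounded norm~\cite{Pisierbook} means it suffices to establish the estimate for the single most convenient value of $t$. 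I expect this last point --- proving $\cbnorm{\hat C}{p}{q}=\|L\|_{\ell_p\to\ell_q}$ uniformly in $t$, i.e.\ that a positive map between commutative $L_p$ spaces is completely bounded with $\mathrm{cb}$-norm equal to its norm --- to be the crux and the step most likely to require care.
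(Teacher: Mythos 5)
Your reduction steps match the paper's: pinching the input (contractivity of $\id_d\otimes P_A$ via averaging over diagonal unitaries), disposing of $q\leq p$ by Theorem~\ref{thm:DJKR}(c), and invoking $t$-independence so that a single value of $t$ suffices. The fatal problem is the step you declare tractable. For $t=p$ the input norm does collapse to $\left(\sum_i\norm{R_i}_p^p\right)^{1/p}$, but the \emph{output} norm $\bigl\|\sum_j\bigl(\sum_i c_{ij}R_i\bigr)\otimes\ket{j}\bra{j}\bigr\|_{(p,q)}$ does not: here the diagonal index $j$ sits in the second tensor factor, whereas the collapse formula (property (c) of Appendix~\ref{app:CB}) applies only when the diagonal index sits in the first factor. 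Your chain (blockwise triangle inequality, then H\"older and $\|L\|_{\ell_p\rightarrow\ell_q}$ applied to the scalar sequence) implicitly requires
\[
\Bigl\|\sum_j S_j\otimes\ket{j}\bra{j}\Bigr\|_{(p,q)}\leq\Bigl(\sum_j\norm{S_j}_p^q\Bigr)^{1/q},
\]
and this is false. Take $c_{ij}=\delta_{ij}$ (so $L$ is the identity and $\|L\|_{\ell_p\rightarrow\ell_q}=1$ for $q\geq p$) and $S_j=R_j=\ket{j}\bra{j}$, $j=1,\dots,n$, mutually orthogonal rank-one projections in the auxiliary space. The operator $\sum_j\ket{j}\bra{j}\otimes\ket{j}\bra{j}$ is block diagonal in the \emph{first} factor as well, so property (c) computes its $(p,q)$-norm exactly as $n^{1/p}$, while your claimed bound is $n^{1/q}<n^{1/p}$ once $p<q$. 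Worse, no repair within your strategy is possible: the output norm is not a function of the block norms at all (all-equal blocks give $n^{1/q}$, orthogonal blocks give $n^{1/p}$, with identical block norms), so any uniform bound in terms of $(\norm{S_j}_p)_j$ must be at least $\|\cdot\|_{\ell_p}$ of that vector, which already exceeds the target $\|L\|_{\ell_p\rightarrow\ell_q}\norm{r}_{\ell_p}$ when $L$ is, say, the all-ones column.

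What your proposal is missing is the paper's mechanism for collapsing that output norm: write it via the infimum (factorization) formula for the $(p,q)$-norm, note that conjugating each block $\rho_i$ by an arbitrary unitary $U_i$ leaves the input norm $\left(\sum_i\norm{\rho_i}_p^p\right)^{1/p}$ unchanged, and invoke Lidskii's theorem to conclude that the supremum over the $U_i$ is attained when the $U_i\rho_iU_i^{\dagger}$ commute with the optimal factor $M$ and with one another. This reduces everything to simultaneously diagonal $\rho_i$, i.e.\ to fully classical inputs, where the output becomes block diagonal in the \emph{first} (auxiliary) factor, property (c) applies, and the definition of $\normt{T}{p}{q}$ finishes the proof. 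Your fallback for general $t$ does not rescue the argument: it is unnecessary (by the $t$-independence you yourself invoke, the $t=p$ case is the whole content) and circular, since the ``crux'' you identify --- that a positive map between commutative $L_p$ spaces has completely bounded norm equal to its norm --- is precisely the statement of Theorem~\ref{thm:CBnorm-classical}.
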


\begin{proof} 

It suffices to show that $\|\mathcal I_C\otimes T\|_{(p,p)\rightarrow (p, q)} \leq \|T\|_{p\rightarrow q}$ for any finite dimensional space $\hilbert C$. Equivalently we show that 
\begin{align}\label{eq:c-c-norm}
\|\mathcal I_C\otimes T(\rho_{CA})\|_{(p,q)}\leq \normt{T}{p}{q} \|\rho_{CA}\|_{p}.
\end{align}

Note that by the `pinching inequality'~\cite{Bhatia-M}  we have $\|\rho_{CA}'\|_{p}\leq \|\rho_{CA}\|_p$ where
$$\rho'_{CA} = \sum_{i=0}^{d_A-1} (I\otimes \bra i) \rho_{CA}(I\otimes \ket i)\otimes \ket i\bra i,$$
and that $\mathcal I_C\otimes T(\rho_{CA}) = \mathcal I_C\otimes T(\rho'_{CA})$. As a result to prove~\eqref{eq:c-c-norm} we assume that $\rho_{CA}$ has the form
$$\rho_{CA} =  \sum_{i=0}^{d_A-1} \rho_i\otimes | i\rangle \langle i|_A.$$
Moreover, by Theorem~\ref{thm:DJKR} we may assume that $\rho_{CA}$ is positive semi-definite.
Then observe that
$$\|  \rho_{CA} \|_p = \left(  \sum_i \|\rho_i\|_p^p  \right)^{1/p} = \left(  \sum_i \left\|U_i\rho_iU_i^\dagger\right\|_p^p  \right)^{1/p} = \left\|\sum_{i}
U_i\rho_iU_i^{\dagger}\otimes \vert i\rangle \langle i\vert\right\|_p,$$
where $U_i$'s are arbitrary unitary matrices. 

Assume that $q\geq p\geq 1$, and let $\frac 1 p = \frac 1 q + \frac  1 r$. The other case where $p\geq q$ is similar (or one may use Theorem~\ref{thm:DJKR} in this case since $T$ is completely-positive). We now compute  
\begin{align*}
\left\| I\otimes T(\rho_{CA}) \right\|_{(p, q)} & = \left\| \sum_{i, j} c_{ij} \rho_i\otimes |j\rangle \langle j|  \right\|_{(p,q)} \\
&= \inf_{M\geq 0, \|M\|_{2r}=1}   \left\| \sum_{i, j} c_{ij} M^{-1}\rho_iM^{-1}\otimes |j\rangle \langle j|  \right\|_{q} \\
& \leq   \inf_{M\geq 0, \|M\|_{2r}=1}   \sup_{U_i} \left\| \sum_{i, j} c_{ij} M^{-1}U_i\rho_iU_{i}^{\dagger}M^{-1}\otimes |j\rangle \langle j|  \right\|_{q} \\
& =  \inf_{M\geq 0, \|M\|_{2r}=1}   \sup_{U_i} \left( \sum_j \left\| \sum_i  c_{ij} M^{-1}U_i\rho_iU_{i}^{\dagger}M^{-1}  \right\|_q^q \right)^{1/q}.
\end{align*}
By Lidskii's theorem~\cite{Bhatia-M} for every $M$, the term 
$$\left\| \sum_i  c_{ij} M^{-1}U_i\rho_iU_{i}^{\dagger}M^{-1}  \right\|_q$$
 is maximized when $U_i$'s are chosen in such a way that $U_i\rho_i U_i^{\dagger}$'s commute with $M$, and commute with each other. In other words we may assume from the beginning that $\rho_i$'s mutually commute.

So let us assume that $\rho_{CA} = \sum_{i,k} a_{ik} | k\rangle \langle k\vert_C \otimes |i\rangle \langle i|_A$. Then we have
\begin{align*}
\| I\otimes T(\rho_{CA})\|_{(p, q)} & = \left( \sum_k \left\|   \sum_{i,j} a_{ik} c_{ij} \vert j\rangle \langle j\vert  \right\|_q^p \right)^{1/p} \\
& = \left( \sum_k \left\|   T\left( \sum_{i} a_{ik} \vert i\rangle \langle i\vert\right)  \right\|_q^p \right)^{1/p} \\
& \leq \|T \|_{p\rightarrow q} \left( \sum_k \left\|    \sum_{i} a_{ik} \vert i\rangle \langle i\vert  \right\|_p^p \right)^{1/p} \\
& =   \|T \|_{p\rightarrow q}  \|\rho_{CA}\|_{p}.
\end{align*}

\end{proof}

\section{Proof of Riesz-Thorin theorem for Schatten norms\label{app:reisz-thorin-schatten}}

To prove this theorem we use Hadamard's three-line theorem~\cite{ReedSimon}.

\begin{theorem}\label{thm:hadamard} Let $f: S\rightarrow \mathbb C$ be a bounded function that is holomorphic in the interior of $S$ and continuous on the boundary. For $k=0, 1$ let
$$M_k=\sup_{t\in \mathbb R} |f(k+it)|.$$
Then for every $0\leq \theta\leq 1$ we have $|f(\theta)|\leq M_0^{1-\theta}M_1^{\theta}$.
\end{theorem}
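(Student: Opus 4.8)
The plan is to reduce the statement to the ordinary maximum modulus principle; the genuine obstruction is that $S$ is an \emph{unbounded} strip, where the maximum modulus principle fails without a growth hypothesis. First I would normalize. Assuming for the moment that $M_0, M_1 > 0$, define the auxiliary function
\[
g(z) = \frac{f(z)}{M_0^{1-z} M_1^{z}},
\]
where $M_0^{1-z}M_1^{z} = \exp\big((1-z)\log M_0 + z\log M_1\big)$ is entire and nowhere vanishing. Computing the modulus on the two boundary lines gives $|M_0^{1-z}M_1^z| = M_0$ when $\mathrm{Re}(z)=0$ and $= M_1$ when $\mathrm{Re}(z)=1$, since $|M_0^{-it}| = |M_1^{it}| = 1$; hence $|g| \le 1$ on $\partial S$. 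Because $f$ is bounded and the denominator has modulus $M_0^{1-x}M_1^{x}$, which for $x=\mathrm{Re}(z)\in[0,1]$ stays between $\min(M_0,M_1)$ and $\max(M_0,M_1)$, the function $g$ is bounded and holomorphic on $S$ and continuous up to the boundary. It then suffices to prove $|g(\theta)| \le 1$ for $\theta\in[0,1]$, which is exactly the claimed inequality after clearing the denominator.

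The heart of the argument is to damp the behaviour of $g$ at infinity. For $\epsilon > 0$ set
\[
g_\epsilon(z) = g(z)\, e^{\epsilon(z^2 - z)}.
\]
Writing $z = x+iy$ one has $\mathrm{Re}(z^2 - z) = x^2 - x - y^2 \le -y^2$ on $S$, because $x^2 - x \le 0$ for $0\le x\le 1$, so $|g_\epsilon(z)| \le |g(z)|\, e^{-\epsilon y^2}$. In particular $|g_\epsilon|\le 1$ on both boundary lines, and $|g_\epsilon(z)| \to 0$ uniformly in $x$ as $|y|\to\infty$. Now I would fix $\theta\in(0,1)$ and choose $Y$ so large that $|g_\epsilon| \le 1$ on the horizontal segments $\mathrm{Im}(z) = \pm Y$, which is possible by the uniform decay. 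On the bounded rectangle $R = \{x+iy : 0\le x\le 1,\ |y|\le Y\}$ the function $g_\epsilon$ is holomorphic in the interior and has modulus at most $1$ on all four edges, so the maximum modulus principle yields $|g_\epsilon(\theta)| \le 1$, that is $|g(\theta)|\, e^{\epsilon(\theta^2-\theta)} \le 1$.

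Letting $\epsilon\to 0$ and using $e^{\epsilon(\theta^2-\theta)}\to 1$ gives $|g(\theta)|\le 1$, hence $|f(\theta)| \le M_0^{1-\theta}M_1^{\theta}$. Finally I would remove the standing assumption $M_0, M_1 > 0$ by applying the result with each $M_k$ replaced by $M_k + \delta$, for which the boundary bounds only improve, and then letting $\delta\to 0$. The only delicate point in the whole argument is this passage from the unbounded strip to a finite rectangle: the Gaussian-type factor $e^{\epsilon(z^2-z)}$ is precisely the device that forbids bad growth at infinity and makes the ordinary maximum modulus principle applicable.
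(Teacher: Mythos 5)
Your proof is correct, and it is the standard argument for Hadamard's three-line theorem. Note that the paper itself does not prove this statement at all: Theorem~\ref{thm:hadamard} is quoted as a known result with a citation to Reed--Simon, and is then used as the black-box ingredient in the paper's proof of the Riesz--Thorin theorem (Theorem~\ref{thm:reisz-thorin-special}) in the appendix. So there is no paper proof to compare against; what you have supplied is precisely the missing classical argument. All the steps check out: the normalization $g = f/(M_0^{1-z}M_1^{z})$ has modulus at most $1$ on both boundary lines and stays bounded on the strip because the denominator's modulus depends only on $\mathrm{Re}(z)$ and is bounded below by $\min(M_0,M_1)>0$; the damping factor $e^{\epsilon(z^2-z)}$ satisfies $\mathrm{Re}(z^2-z)\leq -y^2$ on the strip, which both preserves the boundary bound and forces uniform decay as $|\mathrm{Im}(z)|\to\infty$, so the maximum modulus principle on a sufficiently tall rectangle applies; and the two limiting passages ($\epsilon\to 0$, then $\delta\to 0$ to remove the assumption $M_0,M_1>0$) are handled correctly. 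The one point worth making explicit, which you use implicitly, is that for $\theta\in\{0,1\}$ the claim is immediate from the definition of $M_0,M_1$, so restricting to $\theta\in(0,1)$ before invoking the rectangle argument loses nothing.
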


\begin{proof}[Proof of Theorem~\ref{thm:reisz-thorin-special}]
First note that by H\"older's duality
 \[
  \normt{T_\theta}{p_\theta}{q_\theta} = \sup_{\norm{X}_{p_\theta} = 1,
\norm{Y}_{q'_\theta} = 1} | \tr{Y T_\theta(X)}|.
 \]
 So we need to show that for every $X,Y$ with $\norm{X}_{p_\theta} =\norm{Y}_{q'_\theta} = 1$ we have
\begin{align}\label{eq:yttheta}
|\tr{YT_{\theta}(X)}| \leq \left(  \sup_{t\in \reals} \normt{T_{it}}{p_0}{q_0}        \right)^{1-\theta} \left(    \sup_{t\in
\reals} \normt{T_{1+it}}{p_1}{q_1} \right)^\theta.
\end{align}

For an arbitrary matrix $X$ and complex number $z$ we may define $X^z$ using the singular value decomposition of $X$. That is, assume that $X=UDV$ where $U, V$ are unitary and $D$ is diagonal with non-negative entries. Then define $X^z:=U D^{z}V$. Using this notation, to prove~\eqref{eq:yttheta} we equivalently need to show that for every $X,Y$ such that $\|X\|_1=\|Y\|_1=1$ we have
\begin{align}\label{eq:yttheta-2}
\left|\tr{Y^{\frac{1}{q'_\theta}}T_{\theta}(X^{\frac{1}{p_\theta}})}\right| \leq \left(  \sup_{t\in \reals} \normt{T_{it}}{p_0}{q_0}        \right)^{1-\theta} \left(    \sup_{t\in
\reals} \normt{T_{1+it}}{p_1}{q_1} \right)^\theta.
\end{align}

Fix $X,Y$ with $\|X\|_1=\|Y\|_1=1$, and for $z\in S$ define
\[
 f(z) = \tr{ Y^{\frac{1-z}{q'_0} + \frac{z}{q'_1}}\, T_z(
X^{\frac{1-z}{p_0} + \frac{z}{p_1}} ) }.
\]
Note that 
$$f(\theta)=\tr{Y^{\frac{1}{q'_\theta}}T_{\theta}(X^{\frac{1}{p_\theta}})}.$$
$f(z)$ satisfies the assumptions of Hadamard's three-line theorem. Therefore we have
\begin{align}\label{eq:upper-h}
\left|\tr{Y^{\frac{1}{q'_\theta}}T_{\theta}(X^{\frac{1}{p_\theta}})}\right|=|f(\theta)| \leq \left( \sup_{t\in \mathbb R} |f(it)|    \right)^{1-\theta}\left(   \sup_{t\in \mathbb R} |f(1+it)|    \right)^{\theta}.
\end{align}

Observe that for $t\in \mathbb R$
\begin{align*}
  |f(it)| &= \left |  \tr{ Y^{\frac{1-it}{q'_0} + \frac{it}{q'_1}}\, T_{it}(
X^{\frac{1-it}{p_0} + \frac{it}{p_1}} ) }\right | \\
&\stackrel{(a)}{\leq} \norm{Y^{\frac{1-it}{q'_0}+\frac{it}{q'_1}}}_{q'_0}
\norm{T_{it} (X^{\frac{1-it}{p_0} + \frac{it}{p_1}} )}_{q_0} \\
& \stackrel{(b)}{\leq} \norm{Y^{\frac{1-it}{q'_0}+\frac{it}{q'_1}}}_{q'_0}
\normt{T_{it}}{p_0}{q_0} \norm{X^{\frac{1-it}{p_0} + \frac{it}{p_1}}}_{p_0} \\
& \stackrel{(c)}{\leq} \norm{Y^{1/q'_0}}_{q'_0} \normt{T_{it}}{p_0}{q_0} \norm{ X^{1/p_0}}_{p_0}
\\
& = \normt{T_{it}}{p_0}{q_0}.
 \end{align*}
Here in $(a)$ we use H\"older's inequality, in $(b)$ we use the definition of super-operator norm, and in 
$(c)$ we use the fact that $p$-norms are invariant under multiplication by unitary matrices.

We similarly have 
$$|f(1+it)| \leq \normt{T_{1+it}}{p_1}{q_1}.$$
Putting these two bounds in~\eqref{eq:upper-h} gives the desired result~\eqref{eq:yttheta-2}.

\end{proof}



\end{document}